\documentclass[11pt,a4paper]{article}

\usepackage[T1]{fontenc}
\usepackage[utf8]{inputenc}
\usepackage{lmodern}

\usepackage[margin=2.5cm]{geometry}

\usepackage{amsmath}
\usepackage{amssymb}
\usepackage{amsthm}
\usepackage{mathtools}       
\usepackage{listings}        
\usepackage{xcolor}

\usepackage{booktabs}      
\usepackage{microtype}       
\usepackage{tikz}
\usetikzlibrary{positioning, arrows.meta, shapes.geometric}
\usepackage{algorithm}
\usepackage{algpseudocode}

\usepackage[colorlinks=true,
            linkcolor=black,
            citecolor=black,
            urlcolor=blue]{hyperref}

\usepackage{cite}

\newtheorem{theorem}{Theorem}
\newtheorem{lemma}[theorem]{Lemma}
\newtheorem{corollary}[theorem]{Corollary}
\newtheorem{proposition}[theorem]{Proposition}
\newtheorem{definition}[theorem]{Definition}

\newtheorem{example}[theorem]{Example}
\newtheorem{remark}[theorem]{Remark}

\newcommand{\R}{\mathcal{R}}          
\newcommand{\nf}[1]{\mathrm{nf}(#1)}  
\newcommand{\xorfold}{\oplus}          
\newcommand{\scope}[1]{\mathcal{S}_{#1}}
\newcommand{\tostep}{\to}
\newcommand{\tomany}{\to^{*}}
\newcommand{\dist}[1]{\mu_{#1}}   
\newcommand{\bern}{\mathrm{Bern}}     
\newcommand{\unroll}{\mathrm{unroll}} 

\lstdefinelanguage{DEM}{
  morekeywords={error, REPEAT, DETECTOR, OBSERVABLE_INCLUDE},
  sensitive=false,
  morecomment=[l]{\#},
}
\title{Quasilinear Equivalence Checking\\
       for Detector Error Models}
\author{Mathys Rennela\thanks{Unitary Foundation, France}}

\date{\today}

\begin{document}

\maketitle

\begin{abstract}
A Detector Error Model (DEM) is a structured representation of error mechanisms in quantum circuits, which has gained popularity in quantum compilation pipelines for its ability to capture fault-tolerance at a circuit level. 
It lists error mechanisms as instructions targeting detectors and observables, specifying for each physical fault channel the probability that the fault fires, the detectors it triggers, and the observables it flips.

In this paper, we develop an equational theory for DEMs, with its associated categorical semantics. We present a sound, terminating, confluent rewriting system for DEM terms, formulating it as a symmetric monoidal theory (a PROP) over the Giry monad. 
We prove that every DEM term has a unique normal form, which can be computed efficiently in quasilinear time $O(k|E|\log|E|)$, where $|E|$ is the number of instructions and $k$ bounds the size of a target set. This provides a complete set of invariants (via Tanner graphs) for structural DEM equivalence.
We provide the first static decision procedure for DEM equivalence, with rigorous correctness guarantees. It is complete (decides full decoder-equivalence exactly) for non-adaptive quantum error correction (QEC) pipelines, and scales to a sound and applicable decision procedure for partially-adaptive circuits (lattice surgery, distributed QEC, ...) without suffering exponential overhead. We discuss its application to the verification and optimisation of quantum compilers.
\end{abstract}

\section{Background}
\label{sec:background}

A core mechanism of quantum error correction (QEC) is the measurement of error syndromes, which witness the occurrence of physical faults in the quantum hardware. Decoders are (classical) algorithms which guess which error occurred from the observed error syndromes, and decide how to correct it, based on the fault model it was trained on or configured with. Because one cannot directly measure quantum data without collapsing it, one must statistically infer the most likely failure point that explains the triggered detectors.

Stim~\cite{gidney2021stim} is a software library for quantum error correction designed for high-performance simulation of quantum circuits under realistic noise models. It has become a standard tool, and is widely used in quantum error correction software research and development. Stim encodes the fault model associated to a quantum circuit as a \emph{Detector Error Model} (DEM): a flat or structured list of
instructions of the form
\[
  \texttt{error}(p)\; D_{i_1}\cdots D_{i_m}\; L_{j_1}\cdots L_{j_w}
\]
asserting that some physical fault channel fires independently with
probability $p$, triggers detectors $D_{i_1},\ldots,D_{i_m}$, and
flips logical observables $L_{j_1},\ldots,L_{j_w}$.\footnote{The listing of detectors before observables is a presentational convention;
the target list is an unordered multiset over detector and observable indices,
and the order of targets within an instruction carries no semantic content.
}

Stabilisers are operators that leave every state invariant within a given code space, so that measuring their eigenvalues reveals information about the occurrence of computation faults without collapsing the encoded quantum state. Syndrome extraction cycles are quantum circuits which measure stabilisers, and in doing so define a measurement schedule. Such circuits produce measurement outcomes, called syndrome bits, which put together form error syndromes. Detectors are Boolean functions of syndrome bits. They detect faults in the circuit by evaluating to True (1) when a fault triggers its specific combination of syndrome bits.
Observables witness the effect of faults on logical qubits: they evaluate to True when a fault flips the value associated to its logical qubit. We refer the interested reader to Gidney's Stim presentation paper~\cite{gidney2021stim} and Stim's official documentation~\cite{stimdocs} for an introduction to the syntax and semantics of DEMs, and to Gottesman's lecture notes~\cite{gottesman2009qec} for an introduction to the stabiliser formalism.

Typically, a DEM is generated from a quantum circuit annotated with a noise model. It is commonly constructed by Pauli error propagation: for each fault location $l$ (for example, a depolarising channel on a gate), record the error instruction corresponding to the detectors that $l$ triggers and the logical observables that $l$ flips. The resulting DEM encodes the full fault model of the circuit and the noise model given as input, in a compact and decoder-agnostic text file.

With the popularity of Stim, DEMs have emerged as a standard representation in quantum error correction, bridging the gap between noisy quantum circuits and classical decoding algorithms. They decouple error mechanism specification from decoding, providing a standard framework for any decoding library to plug onto. A DEM can be seen as a lossy compression of a noisy circuit, as it captures just what the decoder needs: detectors and logical observables for every source of errors, all in an efficient text format. The format is general enough to capture a wide range of QEC codes, while accurately describing complex noise frameworks.

DEMs are given as input to decoders, which assume that the information provided by the DEM is correct. Moreover, two DEMs can describe the same fault model. This happens for example when a circuit optimisation pass rewrites a circuit in a way that preserves the underlying fault model, or in complex compilation pipelines where multiple quantum circuits (patches) are stitched together (see for example Litinski's game of lattice surgery~\cite{litinski2019lattice}). Deciding whether two DEMs are equivalent, i.e. whether they describe the same fault model, is crucial in the verification that a quantum compilation pipeline preserves fault-tolerance. This equivalence problem cannot be viably tackled by simulation, because the sample complexity required to distinguish two DEMs with high confidence is often exponential in the presence of rare events~\cite{takou2025estimatingdecodingcoherenterrors}. 

In this work, we propose a formal categorical abstraction for DEM equivalence, pivoting away from raw file semantics to provide a complete equational theory of error models. In the same way that Kleene Algebra with Tests (KAT)~\cite{kozen1997kleene} provides verifiable completeness for classical program flows, we aim to provide static, deterministic guarantees for QEC fault models. By formalising DEM extraction as a functor into a discrete symmetric monoidal category, we establish a decision procedure that operates independent of simulation. This categorical structure formalises that error events compose sequentially and combine independently.

We define a term language for DEMs and a notion of scope-local equivalence (Section~\ref{sec:terms}), and present a rewriting system $\R$ on DEM terms representing a symmetric monoidal theory (Section~\ref{sec:rules}). We prove that $\R$ is sound, terminating, and confluent, yielding a decision procedure for scope-local equivalence that runs in quasilinear time $O(k|E|\log|E|)$ via unique complete invariants (Section~\ref{sec:results}). We further show that for QEC pipelines where logical measurements occur at fixed schedule points (making DEMs observable-separated), scope-local equivalence is equivalent to full decoder-equivalence. We then extend this to partially-adaptive circuits via observable-coherence classes (Section~\ref{sec:completeness}). We conclude in Section~\ref{sec:discussion} with directions for future work.

\paragraph{Related work.} 
Verifying the correctness of quantum error correction protocols is an active research topic in formal methods~\cite{carette2019szx,chareton2026hybrid,chen2025verifyingftqc,sundaram2025hoaremeetsheisenberg,huang2026,fang2024symbolic,mei2024eqcheckmodelcounting}.
Several works have studied DEMs as a mathematical formalism for quantum error correction, since Gidney's tool presentation paper for Stim~\cite{gidney2021stim}. 
Such works focus on the expressivity of the formalism~\cite{derks2025designingFTQC}, and on estimation methods~\cite{arms2026demwillow,blumekohout2025estimating,takou2025estimatingdecodingcoherenterrors}, but not on the problem of DEM equivalence. Our use of monadic composition (via the Giry monad~\cite{giry1982categorical}) is inspired by the rich literature on categorical probability theory (see e.g.~\cite{jia2021commutativemonads,affeldt2020trustful,moggi1991notions,Fritz2020}).
The problem of quantum circuit equivalence has been studied (see e.g.~\cite{peham2022equivalenceZX,wille2009,huang2026,yamashita2010fastecqc,mei2024eqcheckmodelcounting,rodatz2026faulttoleranceconstruction}), but the DEM equivalence problem differs for two reasons: two equivalent circuits may not produce the same DEM, and it is unknown whether there is a translation from circuits to DEMs that preserves equivalence; the notion of equivalence for circuits is much more fine-grained than the one for DEMs, which only captures decoder-equivalence.

\section{A term language for detector error models}
\label{sec:terms}

The library Stim~\cite{gidney2021stim} introduced a structured file format for DEMs (.dem), which passes raw fault configurations to decoders. Rather than relying on rigid file-parsing semantics, we elevate DEMs to a formal algebra.

Let $\Sigma$ be the space of all possible DEM states, structured as a \emph{discrete symmetric monoidal category}~\cite{maclane1998categories}. The objects are finite sets of detector and observable indices. A DEM state $\sigma \in \Sigma$ with target set $S$ is a morphism $I \to S$ (a process triggering those targets), and the tensor product $\oplus$ acts as disjoint union: $\sigma_1 \oplus \sigma_2$ is defined if and only if their target sets are disjoint ($S_1 \cap S_2 = \emptyset$). In this categorical framework, generating a DEM (e.g., via Stim's \texttt{detector\_error\_model()}) is simply a \emph{functor}\footnote{A functor is just a structure-preserving map between two mathematical spaces. Here, it guarantees two intuitive physical laws: sequential quantum gates map structurally to sequentially chained error mechanisms, and spatially disjoint quantum operations map transparently to the disjoint ($\oplus$) tensor. Formulating extraction this way ensures that structurally independent noise events in the quantum hardware decompose cleanly and predictably into the DEM term grammar.} $F: \mathbf{Circuit} \to \mathbf{\Sigma}$ from the category of fault-tolerant stabiliser circuits. In other words, generating DEMs is a structured and compositional translation process.

To reason algebraically about these states, we abstract the \texttt{.dem} file format into a term language. The grammar of this language, defined below over countably infinite sets $\mathbb{D}$ (detectors) and $\mathbb{O}$ (observables), captures the right level of abstraction to define DEM equivalence.

\begin{definition}[DEM term grammar]
\label{def:grammar}
A \emph{DEM term} is a well-formed expression $t$ in the following grammar:
\[
  t \;::=\; \varepsilon \;\mid\; \mathtt{error}(p)\;S \;\mid\; t\,;\,t
    \;\mid\; \mathtt{REPEAT}\;N\;\{t\}
\]
where:
\begin{itemize}
  \item $\varepsilon$ is the \emph{empty term};
  \item $;$ is the \emph{sequential composition} operator, with unit $\varepsilon$;
  \item $p \in [0, 0.5)$ is a probability (see Remark~\ref{rem:prob} for this restriction);
  \item $S$ is a finite multiset of \emph{targets} drawn from $\mathbb{D} \cup \mathbb{O}$,
    where elements of $\mathbb{D}$ are called \emph{detectors} and elements of $\mathbb{O}$
    are called \emph{observables};
  \item $N \in \mathbb{N}_{\geq 1}$ is a positive integer, so that $\mathtt{REPEAT}\;N\;\{t\}$ is a \emph{repeat block} with body $t$ and repetition count $N$.
\end{itemize}

The full expression $\mathtt{error}(p)\,S$ is called an \emph{instruction}. Targets carry XOR semantics, so a target appearing an even number of times has no effect. 
\end{definition}

\begin{example}
Consider a simple DEM term with two detectors ($D_0, D_1$) and two observables ($L_0, L_1$).
\begin{lstlisting}[language=DEM]
error(0.1) D0 D1 L1
error(0.2) D1 L0
\end{lstlisting}
is the composition of two instructions: $t = \mathtt{error}(0.1)\{D_0, D_1, L_1\} \,;\, \mathtt{error}(0.2)\{D_1, L_0\}$. 
This DEM states that with 10\% probability a fault triggers $D_0$ and $D_1$ together and flips the logical outcome $L_1$, and independently with 20\% probability another fault triggers $D_1$ and flips the logical outcome $L_0$. If both faults occur simultaneously, $D_1$ is flipped twice and thus cancels out (XOR semantics). 
\end{example}

It is important to note that the full DEM format (.dem) specified by Stim includes annotation instructions (\texttt{detector}, \texttt{logical\_observable}, \texttt{shift\_detectors}) that declare and position detectors but do not introduce error mechanisms. The format also supports decomposition hints via the separator target \texttt{\^{}}, and in the latest versions of Stim (from v1.15), an optional metadata tag on each instruction.
We restrict DEM terms to the \emph{error-mechanism fragment}: the grammar
contains only \texttt{error} instructions and \texttt{REPEAT} blocks.
We assume that detector indices are \emph{absolute}, i.e.\ acting as if all
\texttt{shift\_detectors} offsets have been resolved. Decomposition hints (\texttt{\^{}}) and metadata tags are syntactic sugar, and therefore semantically irrelevant.

\begin{remark}
In Stim, a \texttt{shift\_detectors} instruction inside a repeat block advances the detector offset at each iteration, so the same relative detector index $D_i$ in iteration $k$ and iteration $k{+}1$ resolves to distinct absolute indices. 
Any frontend consuming raw \texttt{.dem} files must first perform a statically-computable resolution of these coordinate shifts (i.e. replace relative indices with absolute, globally unique tracking identifiers) before the absolute-index convention assumed here holds. Once offsets are resolved, the structural \texttt{REPEAT} blocks can be preserved without tracking mutable coordinate state.
\end{remark}

We write $\mathbf{DEM}$ for the set of all DEM terms. In the remainder of the paper, we refer to terms $\mathtt{REPEAT}\;N\;\{t\}$ as \emph{repeat blocks}, and we regard sequential composition $;$ as an associative operator (i.e. $t\,;\,\varepsilon = \varepsilon\,;\,t = t$ and
$(t_1\,;\,t_2)\,;\,t_3 = t_1\,;\,(t_2\,;\,t_3)$ for all terms $t, t_1, t_2, t_3 \in \mathbf{DEM}$). Sequential composition is defined modulo associativity. This means that a scope's instruction sequence is a flat, n-ary list rather than a binary tree. 
We write $I(t)$ and $d(t)$ for the set of instructions and the depth of repeat blocks in a term $t$, respectively. A term is said to be \emph{flat} if it contains no repeat blocks, i.e. if $d(t) = 0$.

\begin{definition}[Scope]
\label{def:term-properties}
The scope of a term $t$ is either the top level of a DEM term, or the body of a repeat block. The set of scopes of a term $t$ is denoted $\scope{t}$, and consists of the top-level scope together with one scope for each occurrence of a $\mathtt{REPEAT}$ block in $t$. Scopes are identified by their position (the path from the root) in the abstract syntax tree of $t$, not by the pair $(N, t')$; distinct $\mathtt{REPEAT}$ blocks with the same repetition count and syntactically identical body give rise to distinct scopes.
\end{definition}

\begin{definition}[Scope distribution]
\label{def:scope-dist}
The \emph{effective target set} $\widehat{S}$ of a target multiset $S$ is
$\widehat{S} = \{\,x \in \mathbb{D} \cup \mathbb{O} : x \text{ appears an odd number of times in } S\,\}$.
The \emph{XOR-fold} operation on probabilities is defined as $p \xorfold q = p + q - 2pq$.
The \emph{XOR-convolution} of two probability measures $\mu, \nu$ on
$\{0,1\}^{\mathbb{D}} \times \{0,1\}^{\mathbb{O}}$ is
$(\mu \circledast \nu)(s, \ell) = \sum_{(s',\ell')} \mu(s',\ell') \cdot \nu(s \oplus_2 s',\, \ell \oplus_2 \ell')$,
where $\oplus_2$ denotes bitwise addition modulo 2 on Boolean vectors.

The \emph{scope distribution} $\dist{t}$ of a term $t$
is the probability measure on $\{0,1\}^{\mathbb{D}} \times \{0,1\}^{\mathbb{O}}$
defined by induction on $t$, where :
\begin{align*}
  \dist{\varepsilon}
    &\;=\; \delta_{(\mathbf{0},\mathbf{0})} \\
  \dist{\mathtt{error}(p)\,S}
    &\;=\; (1-p)\,\delta_{(\mathbf{0},\mathbf{0})}
           \;+\; p\,\delta_{\bigl(\chi_{\widehat{S}\cap\mathbb{D}},\;\chi_{\widehat{S}\cap\mathbb{O}}\bigr)} \\
  \dist{t_1 \,;\, t_2}
    &\;=\; \dist{t_1} \circledast \dist{t_2} \\
  \dist{\mathtt{REPEAT}\,N\,\{t'\}}
    &\;=\; \delta_{(\mathbf{0},\mathbf{0})}
\end{align*}
where $\mathbf{0}$ is the all-zero vector, $\chi_A$ denotes the characteristic vector of a set $A$, and $\delta_x$ denotes the Dirac point mass concentrated at $x$.

\end{definition}

The scope distribution of a repeat block reflects that it contributes no error events at the current scope; its body $t'$ is a separate scope with its own distribution. The semantics of $\mathtt{REPEAT}\,N\,\{t'\}$ is the $N$-fold independent product of the fault distribution defined by body $t'$: each iteration samples faults independently from the same distribution as $t'$.

\begin{definition}[Unrolling]
\label{def:unrolling}
The unrolled term $\unroll(t)$ of a DEM term $t$ is the term obtained by replacing each $\mathtt{REPEAT}\;N\;\{t'\}$ in $t$ with $N$ sequential copies of $t'$. It is defined by induction on $t$:
\begin{align*}
  \unroll(\varepsilon) &= \varepsilon \\
  \unroll(\mathtt{error}(p)\,S) &= \mathtt{error}(p)\,S \\
  \unroll(t_1 \,;\, t_2) &= \unroll(t_1) \,;\, \unroll(t_2) \\
  \unroll(\mathtt{REPEAT}\,N\,\{t'\}) &= \underbrace{\unroll(t')\,;\,\cdots\,;\,\unroll(t')}_{N\text{ times}}
\end{align*}
\end{definition}

The notion of scope is crucial to our formalism, because the notion of equivalence we are interested in this work is not global, but local to each scope. Each rewriting rule is \emph{scope-local}: it can only affect instructions within the same scope. This is a crucial restriction, since it allows to capture the right level of abstraction for DEM equivalence. Indeed, instructions in different repeat blocks are not necessarily independent. Detector error models define a measurement schedule for the underlying circuit, so that detectors in different iterations of a repeat block refer to measurements happening at different times, and thus may correspond to different fault events. For this reason, a repeat block $\texttt{REPEAT}\;$N$\;\{$t$\}$ is not syntactic sugar for $N$ sequential copies of $t$, and if we were to allow rewriting across scope boundaries, our notion of equivalence would be too coarse, as it would identify equivalence of terms which are not actually decoder-equivalent (i.e. which do not describe the same fault model).

\begin{definition}[Decoder-equivalence]
\label{def:decoder-equivalence}
Two DEM terms $t_1, t_2 \in \mathbf{DEM}$ are \emph{decoder-equivalent} if they induce the same joint probability distribution over detector outcomes $\{0,1\}^{\mathbb{D}}$ and observable outcomes $\{0,1\}^{\mathbb{O}}$, where the distribution is computed by replacing each repeat block of count $N$ with $N$ sequential independent copies of its body, i.e.
\[
t_1 \approx t_2 \quad\text{if}\quad \dist{\unroll(t_1)} = \dist{\unroll(t_2)},
\]
\end{definition}

The focus on scope-local relations in this work requires further explanation. We could have chosen to define a global notion of equivalence, that allows to rewrite across scope boundaries. Deciding decoder-equivalence (Definition~\ref{def:decoder-equivalence}) requires unrolling all repeat blocks, making the problem linear in the repetition count of a block. Right now, quantum error correction practitioners rely on statistical equivalence: they sample from two DEMs and use a Monte Carlo method to estimate the total variation distance between the two distributions. This procedure only gives a probabilistic guarantee. A static procedure for scope-local equivalence is decidable and efficient, and respects compositionality.

\section{Rewriting detector error models}
\label{sec:rules}

\begin{figure}[t]
\centering
\begin{align*}
  \mathtt{error}(p_1)\;S \;; \mathtt{error}(p_2)\;S
    &\;\tostep\; \mathtt{error}(p_1 \xorfold p_2)\;S
    &&\text{(R1: XOR fusion)}\\[6pt]
  \mathtt{error}(0)\;S
    &\;\tostep\; \varepsilon
    &&\text{(R2a: zero probability)}\\[6pt]
  \mathtt{error}(p)\;\emptyset
    &\;\tostep\; \varepsilon
    &&\text{(R2b: empty targets)}\\[6pt]
  \varepsilon \;; \; t
    &\;\tostep\; t
    &&\text{(R2c: left unit)}\\[6pt]
  t \;; \; \varepsilon
    &\;\tostep\; t
    &&\text{(R2d: right unit)}\\[6pt]
  \mathtt{error}(p)\;(S \uplus \{L_j, L_j\})
    &\;\tostep\; \mathtt{error}(p)\;S
    &&\text{(R3a: observable cancellation)}\\[6pt]
  \mathtt{error}(p)\;(S \uplus \{D_i, D_i\})
    &\;\tostep\; \mathtt{error}(p)\;S
    &&\text{(R3b: detector cancellation)}\\[6pt]
  \mathtt{error}(p_1) S_1 \;\;;\;\; \mathtt{error}(p_2) S_2
    &\leftrightarrow \mathtt{error}(p_2) S_2 \;\;;\;\; \mathtt{error}(p_1) S_1
    &&\text{(R4: commutativity)}
\end{align*}
\[
  \frac{\displaystyle t \;\tostep\; t'}
       {\mathtt{REPEAT}\;N\;\{t\} \;\tostep\; \mathtt{REPEAT}\;N\;\{t'\}}
  \qquad\text{(R5: repeat body normalisation)}
\]
\caption{Rewriting rules of $\mathcal{R}$. Rules R1--R4 are
  scope-local. Terms are modulo same-scope permutation.
  Rules R2c--R2d eliminate vacuous units introduced by R2a--R2b.
  Rules R3a--R3b eliminate duplicate targets ($\uplus$ is the multiset union).
  Rule R5 is the congruence rule for \texttt{REPEAT} blocks.}
\label{fig:rules}
\end{figure}

We define a rewriting system $\R$ on DEM terms, which presents a symmetric monoidal theory (a PROP~\cite{maclane1965categorical}) for structural DEM equivalence. A PROP can be understood as the formal syntax for \emph{string diagrams}: networks of boxes (error instructions) connected by wires (targets: detectors and observables). The rewriting rules are topological equations that allow us to safely slide, merge, and simplify these boxes, without altering the overarching probabilistic process. In doing so, equational changes become simple graph rewriting operations.

Each rule is scope-local (only affects instructions within the same scope). We define the relation $\tostep$ on $\mathbf{DEM}$ as the smallest relation such that $t_1 \tostep t_2$ if $t_2$ can be obtained from $t_1$ by applying a rewriting rule, and we write $\tomany$ for its reflexive-transitive closure. We write $\nf{t}$ for the normal form of a term $t$ under $\R$, defined as the unique term such that $t \tomany \nf{t}$ and there is no term $t'$ such that $\nf{t} \tostep t'$. The rewriting rules are given in Figure~\ref{fig:rules}. Each rewriting rule is either intra-instruction (it inspects a single instruction and rewrites it), or inter-instruction (it inspects a pair of instructions and rewrites them together). We now define the notion of scope-local equivalence.

\begin{definition}[Term skeleton]
\label{def:skeleton}
The \emph{skeleton} $\rho(t)$ of a term $t$ is the abstract syntax tree of $t$
restricted to \texttt{REPEAT} nodes: it is a finite ordered forest where each
node is labelled by the repetition count $N \in \mathbb{N}_{\geq 1}$  of a \texttt{REPEAT} block, and the children of a node are the \texttt{REPEAT} blocks directly nested inside its body.

Formally, the skeleton $\rho(t)$ is defined by induction on terms as follows:
\begin{itemize}
  \item $\rho(\varepsilon) = \rho(\mathtt{error}(p)\,S) = []$ (the empty sequence);
  \item $\rho(t_1 \,;\, t_2) = \rho(t_1) \cdot \rho(t_2)$ (concatenation of sequences);
  \item $\rho(\mathtt{REPEAT}\,N\,\{t'\}) = [\,\mathrm{node}(N,\,\rho(t'))\,]$ (a singleton sequence containing one tree with root label $N$ and children $\rho(t')$).
\end{itemize}
We say that two terms $t_1$ and $t_2$ are \emph{skeleton-compatible} if $\rho(t_1) = \rho(t_2)$.
\end{definition}

\begin{definition}[Scope-local equivalence]
\label{def:equivalence}
Two skeleton-compatible terms $t_1, t_2 \in \mathbf{DEM}$ (i.e.\ $\rho(t_1) = \rho(t_2)$)
are \emph{scope-locally equivalent}, written $t_1 \sim t_2$, if
\[
  \dist{t_1|_{\mathcal{S}}} = \dist{t_2|_{\mathcal{S}}}
\]
for every scope position $\mathcal{S}$ in their common skeleton, where $t|_{\mathcal{S}}$
is $t$ itself for the top-level scope, and the body $t'$ of the corresponding
$\mathtt{REPEAT}\,N\,\{t'\}$ subterm of $t$ for each nested scope.
\end{definition}

Instructions carry a XOR semantics: each element may appear at most once. This is enforced by Rules R3a--R3b, which eliminate duplicate targets. Rule R2 eliminates trivial instructions (no targets or zero probability).
Rule R5 is a congruence rule, which permits rewriting inside the body of a repeat block, but does not allow rewrites beyond scope boundaries.

Rule R1 fuses same-target instructions by XORing their probabilities. In categorical terms, it arises naturally as \emph{Kleisli composition} in the \emph{Giry monad}~\cite{giry1982categorical} restricted to $\mathbb{F}_2$.\footnote{A physical error mechanism can be seen as a function that outputs a probability distribution over states, and the Giry monad provides the mathematical framework to compose such functions. "Kleisli composition" over $\mathbb{F}_2$ means computing the net probability that two independent coin flips result in an odd number of triggers (i.e. an XOR flip).} Because two probabilistic faults that act on the exact same targets form independent Bernoulli events\footnote{We write $\bern(p)$ for the Bernoulli distribution with success probability $p$.}, their composition operates on the commutative monoid $([0,1], \xorfold)$ with unit element $0$. 

\begin{remark}
\label{rem:prob}
While Stim's syntax permits error probabilities greater than $0.5$, rates $p > 0.5$ represent deterministic flips or anti-correlated noise that fall outside standard stochastic QEC noise models. Thus, we restrict valid DEM terms in our calculus to the physically meaningful regime $p < 0.5$. The boundary $p = 0.5$ represents maximal uncertainty (perfectly uniform outcomes) which mathematically masks other independent events and would collapse the Walsh-Hadamard transform to zero on some characters, violating unique normal forms. The XOR-fold operation is intrinsically and strictly closed on $[0, 0.5)$ (via the multiplicativity of $f(p) = 1-2p$). Because all reductions over $\mathbb{F}_2$ remain confined to this interval, this restriction ensures canonical uniqueness without requiring an explicit normalisation rule.
\end{remark}

\begin{lemma}[XOR-fold properties]
\label{lem:xorfold}
The XOR-fold operation $\xorfold$ is associative, commutative, and has unit element $0$.
\end{lemma}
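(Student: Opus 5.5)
The plan is to linearise $\xorfold$ through the map $f(p) = 1-2p$ mentioned in Remark~\ref{rem:prob}, which turns XOR-fold into ordinary multiplication, and read off all three properties from those of multiplication on $\mathbb{R}$. Commutativity and the unit law are immediate from the defining formula $p \xorfold q = p + q - 2pq$: it is symmetric in $p$ and $q$, and $p \xorfold 0 = p + 0 - 0 = p$. By commutativity, $0 \xorfold p = p$ as well.

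For associativity, the key step is the identity
\[
  f(p \xorfold q) \;=\; 1 - 2p - 2q + 4pq \;=\; (1-2p)(1-2q) \;=\; f(p)\,f(q).
\]
The map $f$ is an affine bijection of $\mathbb{R}$ with inverse $f^{-1}(x) = (1-x)/2$. Hence
\[
  p \xorfold q = f^{-1}\bigl(f(p)\,f(q)\bigr).
\]
So $\xorfold$ is the transport of real multiplication along $f$, and
\[
  f\bigl((p \xorfold q) \xorfold r\bigr) = f(p)f(q)f(r) = f\bigl(p \xorfold (q \xorfold r)\bigr).
\]
Applying $f^{-1}$ to both sides gives associativity. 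The same transport also yields commutativity and the unit law, since $f(0) = 1$.

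As a side remark, the paper uses this operation on $[0,0.5)$. The same computation shows closure there: $f$ maps $[0,0.5)$ bijectively onto $(0,1]$, and $(0,1]$ is closed under multiplication. So $([0,0.5), \xorfold, 0)$ is a commutative monoid, and likewise $([0,1], \xorfold, 0)$, since $f$ maps $[0,1]$ onto $[-1,1]$, which is also closed under multiplication.

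There is no real obstacle here. The only step needing care is verifying the factorisation $f(p \xorfold q) = f(p) f(q)$, which is a one-line expansion. As a fallback, associativity can be checked directly by expanding both sides to
\[
  p + q + r - 2(pq + qr + pr) + 4pqr,
\]
which is manifestly symmetric in $p, q, r$.
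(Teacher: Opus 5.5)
Your proof is correct and follows the paper's route. Like you, the paper reads commutativity and the unit law directly off the formula $p \xorfold q = p + q - 2pq$, and it gets associativity from the identity $p \xorfold q = \frac{1 - f(p)f(q)}{2}$ with $f(p) = 1-2p$, which is exactly your transport-of-multiplication argument; your closure check on $[0,0.5)$ is a correct extra that the paper only asserts in Remark~\ref{rem:prob}.
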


\begin{proof}
Commutativity follows from the definition of $\xorfold$. Associativity follows from direct computation, since $p \xorfold q = \frac{1-f(p)f(q)}{2}$, where $f(p) = 1 - 2p$. The unit element is $0$, since $p \xorfold 0 = p$ for all $p \in [0,1]$.  
\end{proof}

The order of instructions within the same scope does not matter, as all error events within the same scope are independent, and thus commute. Rule R4 captures the commutativity of same-scope instructions. By introducing this rule, we ensure that term equivalences are modulo permutation within the same scope. This is necessary to ensure that the normal form of a term is unique.

\begin{example}
\label{ex:normalform}
Consider the two DEM terms $t_1$ and $t_2$ below.

\noindent
\begin{minipage}[t]{0.48\linewidth}
\begin{lstlisting}[caption={Term $t_1$}]
error(0.1) D0 D1
error(0.2) D0 D1
error(0.3) D2 L0
repeat 5 {
  error(0.05) D3 D3 D4
  error(0.1) D5 L1
}
\end{lstlisting}
\end{minipage}
\hfill
\begin{minipage}[t]{0.48\linewidth}
\begin{lstlisting}[caption={Term $t_2$}]
error(0.3) D2 L0
error(0.26) D0 D1
repeat 5 {
  error(0.1) D5 L1
  error(0.05) D4
}
\end{lstlisting}
\end{minipage}

\smallskip
\noindent Normalising $t_1$: in the top-level scope, the two instructions with target set $\{D_0, D_1\}$ are fused by R1:
\[
  \mathtt{error}(0.1)\,D_0 D_1 \;;\; \mathtt{error}(0.2)\,D_0 D_1
  \;\tostep\;
  \mathtt{error}(0.1 \xorfold 0.2)\,D_0 D_1
  = \mathtt{error}(0.26)\,D_0 D_1
\]
since $0.1 \xorfold 0.2 = 0.1 + 0.2 - 2 \cdot 0.1 \cdot 0.2 = 0.26$.
In the repeat body, $D_3$ appears twice in the first instruction and is cancelled by R3b:
\[
  \mathtt{error}(0.05)\,D_3 D_3 D_4 \;\tostep\; \mathtt{error}(0.05)\,D_4.
\]
Term $t_2$ is already in normal form. Both terms share the normal form (modulo same-scope permutation):
\begin{lstlisting}[caption={Common normal form $\nf{t_1} = \nf{t_2}$}]
error(0.26) D0 D1
error(0.3) D2 L0
repeat 5 {
  error(0.05) D4
  error(0.1) D5 L1
}
\end{lstlisting}
The decision procedure of Corollary~\ref{cor:decision} concludes $t_1 \sim t_2$ by comparing these normal forms.
\end{example}

\section{Normalising detector error models}
\label{sec:results}

Having defined a grammar for DEM terms and rewriting rules for such terms, we explore the structure of DEMs by proving several properties: 
soundness (rewriting rules preserve meaning), termination (no infinite chains of rewrites), confluence (the rewriting system reaches a unique normal form).
This allows us to define normal forms for DEMs, and a decision procedure for scope-local equivalence. Note that all results in this section are proved modulo same-scope permutation, since R4 (commutativity) is a congruence rule. In other words, our rewriting system preserves the meaning of DEMs, since we only simplify trivial instruction, commute independent error events, and fuse error events acting on the same targets. This observation is formalised in the soundness lemma below.

\begin{lemma}[Soundness]
\label{lem:soundness}
For all terms $t_1, t_2 \in \mathbf{DEM}$, if $t_1 \tostep t_2$, then $t_1 \sim t_2$.
\end{lemma}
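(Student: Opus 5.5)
The proof goes by induction on the derivation of $t_1 \tostep t_2$, that is, on the position of the rewritten redex in the abstract syntax tree. The key structural observation comes first: every rule of $\R$ preserves the skeleton, so $\rho(t_1) = \rho(t_2)$ and the relation $t_1 \sim t_2$ is well-defined. Rules R1--R4 rewrite only $\mathtt{error}$ instructions and $\varepsilon$, which contribute $[]$ to $\rho$. Rule R5 leaves the $\mathtt{REPEAT}$ node and its count $N$ intact. Hence, by induction, $\rho$ is unchanged. Skeleton preservation also gives a bijection between the scopes of $t_1$ and $t_2$. Under it, every scope except the one containing the redex has a syntactically identical body, so its distributions agree trivially.

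The main case is a redex at the top level of a scope $\mathcal{S}$. The distribution $\dist{\cdot}$ of a scope body is the $\circledast$-product of the distributions of its top-level items, with repeat blocks contributing $\delta_{(\mathbf{0},\mathbf{0})}$. Since $\circledast$ is a convolution on the abelian group $\mathbb{F}_2^{\mathbb{D}} \times \mathbb{F}_2^{\mathbb{O}}$, it is associative and commutative with unit $\delta_{(\mathbf{0},\mathbf{0})}$. It therefore suffices to check that the redex and its contractum have equal distributions, because equality then propagates through the surrounding context. The rules are checked one at a time.

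\begin{itemize}
\item \textbf{R2c, R2d.} These hold because $\delta_{(\mathbf{0},\mathbf{0})}$ is the unit of $\circledast$.
\item \textbf{R2a, R2b.} When $p=0$, or when $S=\emptyset$ so that $\widehat{S}=\emptyset$, the distribution reduces to $\delta_{(\mathbf{0},\mathbf{0})}$, which is $\dist{\varepsilon}$.
\item \textbf{R3a, R3b.} Here $\widehat{S \uplus \{x,x\}} = \widehat{S}$, since adding two copies of $x$ preserves the parity of every multiplicity.
\item \textbf{R4.} This follows from commutativity of $\circledast$.
\end{itemize}

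The one genuine computation is R1. Let $v$ be the point $(\chi_{\widehat{S}\cap\mathbb{D}},\chi_{\widehat{S}\cap\mathbb{O}})$. Since $v \oplus_2 v = \mathbf{0}$, expanding gives
\[
\bigl((1-p_1)\delta_{\mathbf{0}}+p_1\delta_v\bigr)\circledast\bigl((1-p_2)\delta_{\mathbf{0}}+p_2\delta_v\bigr) = \bigl((1-p_1)(1-p_2)+p_1p_2\bigr)\delta_{\mathbf{0}}+\bigl(p_1+p_2-2p_1p_2\bigr)\delta_v .
\]
This equals $\dist{\mathtt{error}(p_1\xorfold p_2)\,S}$. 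It is also worth noting that $p_1 \xorfold p_2 \in [0,0.5)$, because $1-2(p_1\xorfold p_2)=(1-2p_1)(1-2p_2)\in(0,1]$ (Remark~\ref{rem:prob}). The right-hand side is therefore a well-formed term.

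The remaining case is R5, where the redex lies strictly inside the body $t$ of some $\mathtt{REPEAT}\,N\,\{t\}$, with $t \tostep t'$. By the induction hypothesis, $t \sim t'$, so every scope of $t$ and $t'$ (including the body's own top-level scope) has matching distributions. These are exactly the nested scopes below this node. The scope containing the block sees $\delta_{(\mathbf{0},\mathbf{0})}$ on both sides, and all other scopes are untouched.

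I expect the main obstacle to be bookkeeping rather than mathematics. One point is making the scope bijection precise when R2 deletes items and the sequence is taken modulo associativity; I would handle this by identifying scopes via skeleton positions rather than raw AST paths. The other is stating cleanly that a rule applied inside a sequence acts as a congruence for $\circledast$.
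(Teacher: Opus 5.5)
Your proposal is correct and follows essentially the same route as the paper. Both establish skeleton preservation and then check rule by rule that scope distributions are preserved, using the explicit convolution for R1, the unit and commutativity of $\circledast$ for R2c--R2d and R4, and the induction hypothesis plus the $\delta_{(\mathbf{0},\mathbf{0})}$ contribution of repeat blocks for R5. The differences are minor: you induct on the derivation rather than on the nesting depth $d(t)$, and you spell out two points the paper leaves implicit, namely the context-closure (congruence) argument for $\circledast$ and the closure of $[0,0.5)$ under $\xorfold$.
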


\begin{proof}
The proof proceeds by induction on the maximum nesting depth $d(t)$ of \texttt{REPEAT} blocks. All rewriting rules preserve term skeletons, so it suffices to check that they preserve scope distributions. For each rewriting rule, we verify that $\dist{t_1|_{\mathcal{S}}} = \dist{t_2|_{\mathcal{S}}}$ for every scope position $\mathcal{S}$. 

Consider the reduction ($\mathtt{error}(p_1)\,S \;;\; \mathtt{error}(p_2)\,S \;\tostep\; \mathtt{error}(p_1 \xorfold p_2)\,S$) by Rule R1.
Both instructions have identical literal target multisets $S$, which produce the same effective target set $\widehat{S}$ (Definition~\ref{def:scope-dist}). Define $v = (\chi_{\widehat{S}\cap\mathbb{D}},\chi_{\widehat{S}\cap\mathbb{O}})$. Expanding the definition of scope distribution, we observe that 
\begin{align*}
  \dist{\mathtt{error}(p_1)\,S \;;\; \mathtt{error}(p_2)\,S}
  &= \bigl((1-p_1)\delta_{(\mathbf{0},\mathbf{0})} + p_1\,\delta_v\bigr) \circledast \bigl((1-p_2)\delta_{(\mathbf{0},\mathbf{0})} + p_2\,\delta_v\bigr) \\
  &=(1-(p_1\xorfold p_2))\,\delta_{(\mathbf{0},\mathbf{0})} + (p_1 \xorfold p_2)\,\delta_v
= \dist{\mathtt{error}(p_1 \xorfold p_2)\,S} \qquad (v \xorfold v = \mathbf{0}),
\end{align*}
which is exactly the XOR of two independent Bernoulli random variables.

For Rule R2a, $\dist{\mathtt{error}(0)\,S} = (1-0)\,\delta_{(\mathbf{0},\mathbf{0})} + 0\cdot\delta_v = \delta_{(\mathbf{0},\mathbf{0})} = \dist{\varepsilon}$. For Rule R2b, since $\widehat{\emptyset}=\emptyset$ (by Definition~\ref{def:scope-dist}, the effective set of an empty multiset is empty), $\dist{\mathtt{error}(p)\,\emptyset} = (1-p)\,\delta_{(\mathbf{0},\mathbf{0})} + p\,\delta_{(\mathbf{0},\mathbf{0})} = \delta_{(\mathbf{0},\mathbf{0})} = \dist{\varepsilon}$. For Rules R2c and R2d, $\dist{\varepsilon\,;\,t} = \delta_{(\mathbf{0},\mathbf{0})} \circledast \dist{t} = \dist{t}$ and $\dist{t\,;\,\varepsilon} = \dist{t} \circledast \delta_{(\mathbf{0},\mathbf{0})} = \dist{t}$ since $\delta_{(\mathbf{0},\mathbf{0})}$ is the identity for $\circledast$. For Rules R3a and R3b, removing two copies of the same target does not change the effective target set, so that $\dist{\mathtt{error}(p)\,(S \cup \{x,x\})} = \dist{\mathtt{error}(p)\,S}$. For Rule R4, $\dist{\mathtt{error}(p_1)\,S_1 \;;\; \mathtt{error}(p_2)\,S_2} = \dist{\mathtt{error}(p_1)\,S_1} \circledast \dist{\mathtt{error}(p_2)\,S_2} = \dist{\mathtt{error}(p_2)\,S_2} \circledast \dist{\mathtt{error}(p_1)\,S_1} = \dist{\mathtt{error}(p_2)\,S_2 \;;\; \mathtt{error}(p_1)\,S_1}$, since $\circledast$ is commutative.

For Rule R5, if $t \tostep t'$, by the induction hypothesis (applied to the inner step $t \tostep t'$ with $d(t) \le d(\mathtt{REPEAT}\,N\,\{t\})$), $\dist{t} = \dist{t'}$ at the body scope. At the outer scope both blocks contribute $\delta_{(\mathbf{0},\mathbf{0})}$ by definition, thus $\dist{\mathtt{REPEAT}\,N\,\{t\}} = \delta_{(\mathbf{0},\mathbf{0})} = \dist{\mathtt{REPEAT}\,N\,\{t'\}}$ and the equivalence of the blocks holds natively.
\end{proof}

Then, we observe that each rewriting rule (but R4) strictly decreases a lexicographic weight on terms, so there is no infinite sequence of rewrites. 

\begin{lemma}[Termination]
\label{lem:termination}
There is no infinite sequence of terms $t_1, t_2, \ldots$ such that $t_1 \tostep t_2 \tostep t_3 \tostep \cdots$.
\end{lemma}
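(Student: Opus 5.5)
We prove termination by exhibiting a well-founded measure $W(t)$ that strictly decreases under every rule except R4. R4 is symmetric, so we treat terms modulo same-scope permutation, as the paper stipulates. The measure is a triple compared lexicographically,
\[
  W(t) \;=\; \bigl(\,|I(t)|,\; \textstyle\sum_{\iota \in I(t)} |S_\iota|,\; \#_\varepsilon(t)\,\bigr) \in \mathbb{N}^3,
\]
with the following components.
\begin{itemize}
  \item $|I(t)|$ is the number of instruction occurrences in $t$, counted over all scopes including repeat bodies, but \emph{not} multiplied by repetition counts.
  \item $|S_\iota|$ is the cardinality, with multiplicity, of the target multiset of instruction $\iota$.
  \item $\#_\varepsilon(t)$ is the number of explicit occurrences of $\varepsilon$ appearing as an operand of $;$.
\end{itemize}
Since $\mathbb{N}^3$ with the lexicographic order is well-founded, it suffices to show that every non-R4 step strictly decreases $W$.

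We first check R4 compatibility. R4 only reorders instructions within one scope, so it leaves all three components unchanged. It follows that $W$ is well defined on permutation classes, and that the relation $\tostep$ modulo R4 is what we must show well-founded.

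Next we go through the remaining rules one by one.
\begin{itemize}
  \item R1 replaces two instructions by one, so the first component drops by $1$.
  \item R2a and R2b delete an instruction, replacing it by $\varepsilon$. The first component drops by $1$; the third component may increase, but it is lower in the lexicographic order.
  \item R3a and R3b keep the instruction count fixed and lower the total target size by $2$.
  \item R2c and R2d leave the first two components fixed and remove one $\varepsilon$ operand, so the third component drops by $1$.
\end{itemize}

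For R5, we argue by induction on the derivation of $t \tostep t'$. Every component of $W$ is additive over subterms: $W(\mathtt{REPEAT}\,N\,\{t\})$ is computed from $W(t)$ without scaling by $N$. A strict lexicographic decrease inside the body therefore yields the same strict decrease for the whole term. The same additivity handles rewrites applied inside a larger sequential context $u;\,\cdot\,;v$.

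The main obstacle is the interaction between associativity or permutation modulo and the $\varepsilon$ bookkeeping. Since sequences are treated as flat lists, one might object that $\varepsilon$ is silently absorbed, which would make R2c and R2d vacuous. I would resolve this by fixing the representation to be flat lists in which $\varepsilon$ may occur as an element. R2a and R2b replace an element by $\varepsilon$, and R2c and R2d delete an $\varepsilon$ element. The third component then simply counts $\varepsilon$ elements, and the argument goes through. A second subtlety is that R4 could create infinite chains. This is excluded by definition, because R4 is an equivalence of the ambient quotient rather than a directed step, and the statement is understood modulo same-scope permutation.
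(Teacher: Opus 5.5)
Your proof is correct and follows the same route as the paper: a lexicographic weight in $\mathbb{N}^k$ combining a structural size with the total target count. As in the paper, you treat R4 as a quotient under which the weight is invariant, and you handle R5 by compatibility of the weight with contexts. The paper instead folds your first and third components into a single AST node count $C(t)$, so that R1 and R2a--R2d all decrease it. Your explicit flat-list representation, with $\varepsilon$ as a list element, makes precise a point the paper glosses over: R2c--R2d only do anything, and $C$ is only well defined, if $\varepsilon$ is not silently absorbed by the unit law.
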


\begin{proof}
Assign to each term $t$ a weight $w(t) = (C(t), T(t))$ ordered lexicographically, where $T(t)$ is the total number of targets across all error instructions in $t$ (at all scopes), and $C(t)$ is the number of nodes in the abstract syntax tree of $t$ defined by induction as follows:
\begin{align*}
  C(\varepsilon) &= 1, &
  C(\mathtt{error}(p)\,S) &= 2, \\
  C(t_1\,;\,t_2) &= C(t_1)+C(t_2)+1, &
  C(\mathtt{REPEAT}\,N\,\{t'\}) &= 1+C(t'),
\end{align*}
Rule R4 is a quotient and generates no directed reductions. Every other rule strictly decreases $w$ in lexicographic order:
\begin{itemize}
  \item R1 collapses two instruction nodes and one composition node ($C = 5$) into one instruction node ($C = 2$).
  \item R2a--R2b each replace an instruction node ($C = 2$) with $\varepsilon$ ($C = 1$).
  \item R2c--R2d remove a composition node and an $\varepsilon$ node, decreasing $C$ by $2$.
  \item R3a--R3b leave $C$ unchanged and decrease $T$ by $2$.
  \item R5 (congruence): if any step R1--R3 is applied inside the body $t$, the reduct $t'$ has strictly smaller weight $w(t') < w(t)$. Consequently $w(\mathtt{REPEAT}\,N\,\{t'\}) < w(\mathtt{REPEAT}\,N\,\{t\})$, preserving the strict decrease.
\end{itemize}
Since $w(t) \in \mathbb{N}^2$ strictly decreases at each step, no infinite rewriting sequence exists.
\end{proof}

\begin{lemma}[Local confluence]
\label{lem:local-confluence}
For every term $t$ and every pair of one-step reduction $t \tostep u$ and $t \tostep v$, there exists 
$w$ such that $u \tomany w$ and $v \tomany w$.
\end{lemma}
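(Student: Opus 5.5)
We prove local confluence by a critical-pair analysis, working throughout modulo same-scope permutation. Rule R4 is treated as a quotient, and associativity of $;$ is built into the term structure. Terms are therefore represented, scope by scope, as finite multisets of instructions together with a multiset of repeat blocks. Each directed rule R1--R3 then acts on a sub-multiset of one scope, and R5 lifts such a step into a block body. Given two one-step reductions $t \tostep u$ and $t \tostep v$, we split on whether the two redexes are \emph{disjoint}, \emph{nested}, or \emph{overlapping}.

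\emph{Disjoint redexes.} This covers redexes in different scopes, or in the same scope but touching disjoint sets of instructions or units. In this case the two steps commute: applying the other rule to $u$ and to $v$ yields a common $w$ in one step each. For redexes in different scopes we use the congruence rule R5 together with the scope-locality of all rules. Rules R2c--R2d are handled here by observing that, since $;$ is treated as an associative flat list with unit $\varepsilon$, removing an $\varepsilon$ never destroys another redex.

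\emph{Nested redexes.} Here one redex lies inside a repeat body that is itself not rewritten at the outer level. No outer rule rewrites a whole repeat block (R1--R3 only match instructions), so this reduces to the disjoint case, or to the induction hypothesis on the body via R5. We therefore organise the proof by induction on the depth $d(t)$, as in Lemma~\ref{lem:soundness}.

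\emph{Genuine overlaps.} These are critical pairs within one scope and are the main work. The cases are:
\begin{itemize}
\item R1 versus R1 on three instructions $\mathtt{error}(p_i)\,S$, $i=1,2,3$, with the same $S$. The two fusion orders yield $(p_1\xorfold p_2)\xorfold p_3$ and $p_1\xorfold(p_2\xorfold p_3)$, which coincide by Lemma~\ref{lem:xorfold}.
\item R1 versus R2a. If $p_1=0$, fusing gives $0\xorfold p_2=p_2$, while deleting gives $\mathtt{error}(p_2)\,S$ after R2c/R2d; these agree since $0$ is the unit.
\item R1 versus R2b with $S=\emptyset$. Both sides reduce to $\varepsilon$.
\item R1 versus R3. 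This is the delicate case. Cancelling a pair inside one of the two instructions changes its literal multiset to $S'$, which destroys the R1 redex because R1 matches literal multisets. To close the diagram, apply the same R3 step to the other copy of $S$ on the $v$-side, and also to the fused instruction on the $u$-side. This yields $\mathtt{error}(p_1)S' ; \mathtt{error}(p_2)S'$, which fuses by R1 to $\mathtt{error}(p_1\xorfold p_2)S'$.
\item R3 versus R3 on the same instruction. Cancelling two different pairs commutes, and cancelling overlapping copies of the same target leaves identical results up to multiset equality.
\item R2a/R2b versus R3. Deleting the instruction subsumes the cancellation: after R3 the instruction is still deletable, since the probability is unchanged, and $\widehat{\emptyset}$-style emptiness is preserved when $S=\{x,x\}$ reduces to $\emptyset$.
\end{itemize}

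The main obstacle is exactly this interaction between literal-multiset matching in R1 and the target-cancellation rules R3a--R3b. One must check carefully that every divergence can be rejoined by first cancelling to a common multiset. The subtle sub-case is when cancellation in one instruction empties $S$, so that R2b fires; we then need $\mathtt{error}(p_2)\,\emptyset$ to be removed as well, which it is by R2b. If a case resisted joining, the fallback would be to fix the convention that $S$ in R1 denotes the effective set $\widehat{S}$, which makes R1 and R3 orthogonal. Combined with Lemma~\ref{lem:termination}, local confluence then gives confluence via Newman's lemma.
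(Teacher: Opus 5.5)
Your proposal is correct and follows essentially the same route as the paper: induction on the nesting depth $d(t)$, with R5 discharged by the induction hypothesis on bodies, and the same-scope critical pairs R1/R1, R1/R2, R1/R3 and R3/R3 closed using associativity and unit of $\xorfold$ and the fact that cancellations commute. Your explicit join for R1 versus R3 spells out what the paper compresses into the identity $\widehat{S\cup\{x,x\}}=\widehat{S}$. Because R1 matches literal multisets, one must cancel the same pair in the other copy, and in the fused instruction, before the two sides meet.
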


\begin{proof}
Local confluence is the property that every critical pair of one-step reductions is joinable. Rule R4 is a quotient (no directed overlaps), so we only need to check critical pairs generated by Rules R1--R3 and R5. We proceed by induction on the REPEAT nesting depth $d(t)$. 

\textit{Base case} ($d(t) = 0$). Rule R5 does not apply (no depth means no repeat blocks). Therefore one only needs to check critical pairs generated by Rules R1--R3. Since R1 is the only two-instruction rule, every critical pair on Rules R1--R3 falls into one of three shapes:
\begin{itemize}
\item two R1 reductions (joinable by associativity of $\xorfold$, since both orders of fusion yield $\mathtt{error}(p_1\xorfold p_2\xorfold p_3)\,S$);
\item R1 with a one-instruction rule reducing one of its inputs (joinable because $0\xorfold p = p$ eliminates the R2a case, $\widehat{\emptyset}=\emptyset$ collapses both paths to $\varepsilon$ in the R2b case, and $\widehat{S\cup\{x,x\}}=\widehat{S}$ makes R3 commute with R1);
\item two overlapping R3 reductions (joinable since successive target cancellations commute).
\end{itemize}

\textit{Inductive step} ($d(t) > 0$). Assume local confluence holds for all terms of nesting depth $< d(t)$. Any repeat block $\mathtt{REPEAT}\,N\,\{t'\}$ in $t$ has body $t'$ with $d(t') < d(t)$. By the inductive hypothesis, $t'$ is locally confluent, so any two R5 reductions on the same body converge. For critical pairs between R5 and Rules R1--R3: R5 applies exclusively inside the body while R1--R3 apply at the outer scope, so there is no overlap and the pair is trivially joinable. The base-case argument covers all outer-scope critical pairs.
\end{proof}

By Newman's lemma~\cite{newman1942, baader1998term}, termination and local confluence imply confluence.

\begin{proposition}[Confluence]
\label{prop:confluence}
The rewriting system $\R$ is confluent: if $t \tomany t_1$ and $t \tomany t_2$, then there exists a term $t'$ such that $t_1 \tomany t'$ and $t_2 \tomany t'$.
\end{proposition}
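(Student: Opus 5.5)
The plan is to obtain confluence directly from Newman's lemma~\cite{newman1942, baader1998term}, which states that a terminating, locally confluent abstract rewriting system is confluent. Both hypotheses are already available: termination is Lemma~\ref{lem:termination} and local confluence is Lemma~\ref{lem:local-confluence}. So the proof is mainly a matter of checking that these two lemmas concern the same abstract rewriting system, and that Newman's lemma applies to it in the form used here.

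First, I will fix the carrier set. This is $\mathbf{DEM}$ quotiented by associativity and unit laws of $;$ and by same-scope permutation (R4), since the paper treats R4 as a quotient rather than a directed rule. On this quotient set, $\tostep$ is the relation generated by R1--R3 and the congruence R5. I need to check that $\tostep$ is well defined on equivalence classes: if a representative reduces, then every permuted representative reduces to a permuted reduct. This holds because every rule is scope-local and inspects only instructions of one scope, or a pair of them, up to reordering.

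Second, I will check that the termination weight $w=(C,T)$ is invariant under the quotient. Permutation preserves the AST node count and the target count. Associativity also preserves $C$, since re-bracketing an $n$-ary composition keeps the same number of binary nodes. Hence $w$ descends to classes, and Lemma~\ref{lem:termination} gives well-foundedness of $\tostep$ on the quotient.

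Third, I will invoke Lemma~\ref{lem:local-confluence} for local confluence on classes. Newman's lemma then follows by the standard well-founded induction. Take $t$ with $t\tomany t_1$ and $t\tomany t_2$; if either derivation is empty we are done. Otherwise split off the first steps $t\tostep u_1\tomany t_1$ and $t\tostep u_2\tomany t_2$. Local confluence joins $u_1,u_2$ at some $s$. The induction hypothesis at $u_1$ joins $t_1$ and $s$, and then at $u_2$ joins the result with $t_2$.

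The main obstacle is the rewriting-modulo-equivalence subtlety. Newman's lemma for rewriting modulo an equational theory needs the reduction to be compatible with the equivalence: a reduction step from any member of a class must be simulated from every member. Local confluence modulo permutation must also cover the interaction between R1 and R4, where reordering creates new adjacent same-target pairs. I would handle this by taking the rewriting relation as ``R1 applies to any two same-scope instructions with equal literal target multisets''. With that reading, R4 no longer needs to act as a rule and the system is an ordinary abstract rewriting system on the quotient set, to which Newman's lemma applies verbatim.
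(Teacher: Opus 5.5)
Your argument is the paper's: confluence follows from Newman's lemma applied to Lemma~\ref{lem:termination} and Lemma~\ref{lem:local-confluence}, and working on the quotient by same-scope permutation (with R1 read as acting on any two same-scope instructions with equal literal target multisets) just makes explicit what the paper leaves implicit. One bookkeeping slip to fix: do not also quotient by the unit laws. Modulo $t\,;\,\varepsilon = \varepsilon\,;\,t = t$, the rules R2c--R2d become self-loops on classes, and $C(\varepsilon\,;\,t) = C(t)+2$, so $w$ does not descend to classes. Either keep only associativity and permutation in the quotient, or drop R2c--R2d and measure by the pair (number of instructions, total targets).
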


The existence of a normal form for every term follows from termination, and its uniqueness follows from confluence. We can therefore state the normal form theorem for DEM terms.

\begin{theorem}[Normal form]
\label{thm:normalform}
Every term $t \in \mathbf{DEM}$ has a unique normal form $\nf{t}$ under $\R$ (up to same-scope permutation of instructions), i.e. $t \tomany \nf{t}$, and there is no term $t'$ such that $\nf{t} \tostep t'$. 
\end{theorem}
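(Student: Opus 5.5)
The plan is the standard argument that a terminating, confluent rewriting system has unique normal forms, carried out on the quotient of $\mathbf{DEM}$ by same-scope permutation (R4) and associativity of $;$. Everything is stated at the level of equivalence classes $[t]$ modulo same-scope permutation. The first step is to check that the one-step relation $\tostep$ generated by R1--R3 and R5 is well defined on classes. If $t \equiv s$ by same-scope permutations and $t \tostep t'$, then $s \tostep s'$ for some $s' \equiv t'$. This holds because each rule only needs its redex instructions to be adjacent in some representative of the same scope, and R4 (together with associativity) can always bring them together. The induced relation on classes is the relation to which Lemma~\ref{lem:termination} and Proposition~\ref{prop:confluence} apply.

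\textbf{Existence.} Starting from $t$, repeatedly apply any available rule R1--R3 or R5. By Lemma~\ref{lem:termination}, the weight $w(t) = (C(t),T(t)) \in \mathbb{N}^2$ strictly decreases at each directed step. The weight is invariant under same-scope permutation, since permutation changes neither $C$ nor $T$, so it is well defined on classes. The lexicographic order on $\mathbb{N}^2$ is well founded, so the process stops after finitely many steps at some $u$ with $t \tomany u$ and no $u'$ such that $u \tostep u'$. Call any such $u$ a normal form of $t$.

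\textbf{Uniqueness.} Suppose $t \tomany u_1$ and $t \tomany u_2$ with both $u_1$ and $u_2$ irreducible. By Proposition~\ref{prop:confluence} there is $t'$ with $u_1 \tomany t'$ and $u_2 \tomany t'$. Since $u_1$ is irreducible, the only possible reduction sequence from $u_1$ is the empty one, modulo R4, so $u_1 \equiv t'$. The same argument gives $u_2 \equiv t'$. Hence $u_1$ and $u_2$ coincide up to same-scope permutation, and $\nf{t}$ is well defined as that class.

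\textbf{Main obstacle.} The delicate step is the compatibility of the quotient with the directed rules. Newman's lemma is formulated for an abstract reduction relation, and Proposition~\ref{prop:confluence} must be read as confluence of $\tostep$ on permutation classes, that is, confluence modulo $\equiv$. Without the compatibility check, an irreducible representative might have a permuted representative that is reducible. For example, two R1-fusable instructions might sit non-adjacently in one representative, which would break irreducibility. I would handle this by \emph{defining} irreducibility on classes: a term is irreducible when no representative admits an R1--R3 or R5 step. In the flat case this means concretely that within each scope every instruction has nonzero probability, a nonempty duplicate-free target multiset, and no two instructions share a target set, and that each repeat body is recursively irreducible. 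Once irreducibility is stated this way, the existence and uniqueness arguments above go through verbatim.
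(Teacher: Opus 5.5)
Your proposal is correct and follows the paper's route: existence of a normal form from Lemma~\ref{lem:termination}, and uniqueness from Proposition~\ref{prop:confluence} via Newman's lemma, with R4 treated as a quotient rather than a directed rule. Your extra care makes explicit what the paper leaves implicit when it calls R4 ``a quotient'': you check that the weight $w$ is invariant under same-scope permutation and associativity, and you define reduction and irreducibility on permutation classes, i.e.\ for the relation $\equiv \circ \tostep \circ \equiv$, which also makes your opening compatibility claim hold by definition.
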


The normal form theorem has a clean algebraic reading. Within a single scope, Rules R1--R3 identify two terms that have the same set of fused instructions (up to order); Rule R4 makes order irrelevant. The normal form of a flat term (no \texttt{REPEAT} blocks) is therefore a \emph{finite set} of error instructions with pairwise distinct target multisets, forming a commutative monoid with sequential composition $;$ as operation and $\varepsilon$ as unit. By the normal form theorem, $(\mathbf{DEM}_0 / {\sim},\, ;\,, \varepsilon)$ is the \emph{free commutative monoid} on canonically-fused error instructions (modulo rule R1, such that generators do not share target sets), where $\mathbf{DEM}_0$ denotes the flat fragment. The full term language is then a tree of such monoids, one per scope, since each \texttt{REPEAT} block introduces a fresh scope, whose body is independently a free commutative monoid of fused instructions.

This structure admits a direct graphical interpretation in the language of categorical string diagrams.

\begin{definition}[Tanner Graph]
\label{def:tanner}
The \emph{Tanner graph}~\cite{tanner1981} $G(t)$ of a flat DEM term $t$ is a bipartite graph $(V_L, V_R, E)$ where the left nodes $V_L = \mathbb{D} \cup \mathbb{O}$ are the targets, the right nodes $V_R$ are the error instructions, and an edge connects $v \in V_L$ to $e \in V_R$ if the target $v$ is triggered/flipped by the instruction $e$. Each instruction node $e \in V_R$ is labelled with its Bernoulli probability $p_e \in [0, 0.5)$.
\end{definition}

In the DEM PROP, a term's Tanner graph $G(t)$ is simply its string diagram, a visual representation of the term with its structure and dependencies. The rewriting rules of $\R$ preserve this diagram modulo the topological cancellation of duplicate edges (Rules R3a--R3b) and the fusion of instruction nodes with identical neighborhoods (Rule R1). Since probabilities are bounded in $[0,0.5)$, the normal form of the graph is topologically and quantitatively unique. It is worth noting that a Tanner graph's bipartite graph structure together with its probability labels encode the scope distribution uniquely, and conversely two terms with the same scope distribution have isomorphic Tanner graphs.

  \begin{figure}[ht]
  \centering
  \begin{tikzpicture}[
      target/.style={circle, draw, minimum size=8mm, inner sep=1pt},
      error/.style={rectangle, draw, minimum size=8mm, inner sep=2pt},
      >=Stealth
  ]
      \node (l1) [target] {$D_0$};
      \node (l2) [target, below=0.5cm of l1] {$D_1$};
      \node (e1) [error, left=1.5cm of l1] {$p_1$};
      \node (e2) [error, left=1.5cm of l2] {$p_2$};
      \draw (e1.east) -- (l1.west);
      \draw (e1.east) -- (l2.west);
      \draw (e2.east) -- (l1.west);
      \draw (e2.east) -- (l2.west);
      
      \node (arrow) [right=1.5cm of l1, yshift=-0.65cm] {\huge $\xrightarrow{\quad\mathcal{R}\quad}$};
      
      \node (e1_f) [error, right=1.5cm of arrow] {$p_{12}$};
      \node (l1_f) [target, right=1.5cm of e1_f, yshift=0.65cm] {$D_0$};
      \node (l2_f) [target, below=0.5cm of l1_f] {$D_1$};
      
      \draw (e1_f.east) -- (l1_f.west);
      \draw (e1_f.east) -- (l2_f.west);
  \end{tikzpicture}
  \caption{Tanner graph reduction using rewriting rules $\mathcal{R}$. Error mechanisms with identical targets are fused together algebraically via $p_{12} = p_1 \xorfold p_2$ (Rule R1).}
  \label{fig:tanner_reduction}
  \end{figure}
\begin{theorem}[Complete Set of Invariants]
\label{thm:invariants}
For any flat term $t \in \mathbf{DEM}_0$, the Tanner graph $G(\nf{t})$ of its normal form coupled with its probability assignment forms a complete set of invariants for the equational theory. That is, $t_1 \sim t_2$ if and only if their normal-form Tanner graphs $G(\nf{t_1})$ and $G(\nf{t_2})$ are isomorphic via a bipartite graph isomorphism that acts as the identity on the target nodes $V_L$ (preserving target identities).
\end{theorem}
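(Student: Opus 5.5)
The plan is to prove the two directions separately. Throughout, we use that for flat terms $\sim$ is just equality of top-level scope distributions $\dist{t}$, since the skeletons are both empty and so trivially compatible.

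\emph{Easy direction (isomorphic graphs imply $\sim$).} Suppose $G(\nf{t_1})$ and $G(\nf{t_2})$ are isomorphic by a bipartite isomorphism that is the identity on $V_L$ and preserves probability labels. Then each instruction node of $\nf{t_1}$ is matched with an instruction node of $\nf{t_2}$ having the same neighbourhood, and hence the same target set, and the same probability. So $\nf{t_1}$ and $\nf{t_2}$ agree up to same-scope permutation. Rule R4 and the commutativity of $\circledast$ give $\dist{\nf{t_1}}=\dist{\nf{t_2}}$. Iterating Lemma~\ref{lem:soundness} along the reductions $t_i\tomany\nf{t_i}$ gives $t_1\sim t_2$.

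\emph{Hard direction ($\sim$ implies isomorphic graphs).} By Theorem~\ref{thm:normalform} and soundness, it suffices to show the following claim: two flat normal forms with equal scope distributions consist of the same set of pairs $(\widehat S,p)$. First I describe a flat normal form. Every instruction has a nonempty set of targets with no duplicates (R2b, R3), a probability $p\in(0,0.5)$ (R2a), and pairwise distinct target sets (R1). Its distribution is the XOR-convolution of the measures $(1-p_e)\delta_0+p_e\delta_{v_e}$, where the vectors $v_e\in\mathbb{F}_2^{\mathbb{D}\cup\mathbb{O}}$ are distinct and nonzero. I would then pass to the Walsh--Hadamard (character) transform over the finite-dimensional subspace spanned by the targets involved. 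For a character $\chi_y(v)=(-1)^{y\cdot v}$ we get
\[
  \widehat{\mu}(y)=\prod_e \bigl(1-2p_e\bigr)^{[y\cdot v_e=1]} .
\]
Since $p_e<0.5$, every factor $f(p_e)=1-2p_e$ lies in $(0,1]$, so we may take logarithms. Writing $\lambda_e=-\log f(p_e)>0$, this gives $-\log\widehat\mu(y)=\sum_e \lambda_e [y\cdot v_e=1]$. This says that the function $y\mapsto -\log\widehat\mu(y)$ is the image of the weight vector $\lambda$, supported on the distinct nonzero vectors $v_e$, under the linear map $A:\lambda\mapsto\bigl(y\mapsto\sum_v\lambda_v[y\cdot v=1]\bigr)$.

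\emph{Main obstacle: injectivity of $A$.} The key step, and the one I expect to be hardest, is to show that $A$ is injective on functions supported on nonzero vectors. Then equal distributions force equal $\lambda$, hence equal sets of pairs $(v_e,p_e)$, and hence identical normal-form Tanner graphs up to relabelling of instruction nodes. I would first try the standard identity $[y\cdot v=1]=\tfrac12\bigl(1-(-1)^{y\cdot v}\bigr)$. With it, $A\lambda(y)=\tfrac12\bigl(\sum_v\lambda_v-\widetilde\lambda(y)\bigr)$, where $\widetilde\lambda$ is the Hadamard transform of $\lambda$. The Hadamard transform is invertible, so $A\lambda$ determines $\widetilde\lambda$ up to the additive constant $\sum_v\lambda_v$. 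That constant is fixed by evaluating at $y=0$, where $A\lambda(0)=0$ forces $\widetilde\lambda(0)=\sum_v\lambda_v$. This already determines everything, so $\lambda$ is recovered uniquely. The restriction to nonzero $v$ is needed so that $\lambda_0$, which is invisible, never arises; this is exactly what R2b and R3 guarantee. The strict bound $p<0.5$ from Remark~\ref{rem:prob} is what makes the logarithm legitimate, and it is the point where uniqueness would fail at $p=0.5$.

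Finally, I would note that matching the labelled neighbourhoods of the instruction nodes defines the required bipartite isomorphism, which is the identity on $V_L$. Its inverse direction is the easy direction already proved, which completes the equivalence.
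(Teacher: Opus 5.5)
Your proposal takes the same route as the paper. The reverse direction matches instruction nodes and then uses R4/commutativity of $\circledast$ together with soundness. The forward direction takes the Walsh--Hadamard transform of $\dist{\nf{t}}$, takes logarithms (legitimate since $p<0.5$), and recovers the pairs $(v_e,p_e)$ by inverting the character transform. Your explicit injectivity computation for $A$ usefully spells out what the paper compresses into ``by orthogonality of characters''.

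One sentence in that computation is wrong as stated. Evaluating at $y=0$ does \emph{not} fix the constant $\Lambda=\sum_v\lambda_v$. The identity $\widetilde\lambda(0)=\Lambda$ holds for every $\lambda$, and $A\lambda(0)=0$ always, so that evaluation just reads $0=0$. Indeed $A\delta_0=0$. If the constant really were recoverable from $A\lambda$ alone, $A$ would be injective on all functions, contradicting your own correct remark that $\lambda_0$ is invisible.

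The support condition is what actually does the work. From $\widetilde\lambda=\Lambda\mathbf 1-2A\lambda$ and Hadamard inversion on $\mathbb{F}_2^n$ (with $n$ the number of targets involved), one gets $\lambda_v=\Lambda\,[v=0]-2^{1-n}\sum_y A\lambda(y)(-1)^{y\cdot v}$. So the unknown constant affects only the coordinate $v=0$, and for $v\neq 0$ the value $\lambda_v$ is determined by $A\lambda$ alone. Equivalently, $\ker A$ is exactly the span of $\delta_0$, and once $\lambda_0=0$ is imposed one recovers $\Lambda=2^{1-n}\sum_y A\lambda(y)$.

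This is the same fact as the paper's observation that the constant term of $\log\widehat\mu$ sits on the trivial character, while the coefficients $c_{v_i}$ sit on the nontrivial characters $\chi_{v_i}$ with $v_i\neq 0$ (guaranteed by R2b/R3). With that replacement your argument is complete.
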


\begin{proof}
For $(\Rightarrow)$: Suppose $t_1 \sim t_2$. By soundness, $t_1 \sim \nf{t_1}$ and $t_2 \sim \nf{t_2}$; combined with $t_1 \sim t_2$ by transitivity and symmetry of $\sim$, we get $\nf{t_1} \sim \nf{t_2}$, hence $\dist{\nf{t_1}} = \dist{\nf{t_2}}$. Let $\nf{t} = \mathtt{error}(p_1)\,v_1 \;;\; \cdots \;;\; \mathtt{error}(p_n)\,v_n$ where each $v_i \in \{0,1\}^{|\mathbb{D}|+|\mathbb{O}|}$ is uniquely distinct. The Walsh-Hadamard transform~\cite{macwilliams1977} of $\mu_{\nf{t}}$ evaluated at a character $\chi_s$ for $s \in \{0,1\}^{|\mathbb{D}|+|\mathbb{O}|}$ is given by $\prod_i (1-2p_i)^{s \cdot v_i}$, where $\cdot$ denotes the inner product over $\mathbb{F}_2$. Because $p_i < 0.5$ strictly implies $1-2p_i > 0$, we can take the logarithm of the transform. Under this logarithm, the coefficients corresponding to orthogonal characters $(-1)^{s \cdot v_i}$ on the Boolean hypercube $(\mathbb{Z}/2\mathbb{Z})^{|\mathbb{D}|+|\mathbb{O}|}$ are $c_{v_i} = -\frac{1}{2}\log(1-2p_i) > 0$. By orthogonality of characters, these coefficients and their associated vectors $v_i$ are uniquely and injectively determined. Thus $\dist{\nf{t_1}} = \dist{\nf{t_2}}$ uniquely identifies the set of $(p_i, v_i)$ pairs. Therefore, $G(\nf{t_1}) \cong G(\nf{t_2})$ via a target-preserving isomorphism. 
For $(\Leftarrow)$: If $G(\nf{t_1}) \cong G(\nf{t_2})$ via an isomorphism preserving target identities, then both normal forms consist of the exact same set of target multisets with the exact same associated probabilities. Therefore $\nf{t_1}$ can be transformed into $\nf{t_2}$ using only the commutativity rule R4, meaning $t_1 \sim t_2$.
\end{proof}

A direct consequence of the normal form theorem is that the normal form of a term can be computed efficiently (see Algorithm~\ref{alg:norm}), by simply applying the rewriting rules until none of them applies anymore. And since normal forms are unique, scope-local equivalence can be decided by comparing normal forms modulo same-scope permutation.

\begin{corollary}[Quasilinear decision procedure]
\label{cor:decision}
Given two terms $t_1, t_2 \in \mathbf{DEM}$, one can decide whether $t_1 \sim t_2$ by computing $\nf{t_1}$ and $\nf{t_2}$, and checking whether they are equal modulo same-scope permutation. This decision procedure is independent of the size of the repeat blocks, and runs in time $O(k|E| \log |E|)$, where $E$ is the set of instructions in $t_1$ and $t_2$ and $k$ is the maximum number of targets in an instruction.
\end{corollary}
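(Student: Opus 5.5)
The plan has two parts: correctness of the decision procedure, then the running-time bound.

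\textbf{Correctness.} The goal is to show that $t_1 \sim t_2$ holds exactly when $\nf{t_1}$ and $\nf{t_2}$ coincide modulo same-scope permutation.

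Skeletons come first. Every rule of $\R$ preserves $\rho$, as noted in the proof of Lemma~\ref{lem:soundness}, so $\rho(\nf{t_i}) = \rho(t_i)$. Skeleton-compatibility of $t_1,t_2$ can therefore be checked on the normal forms by a linear traversal comparing repetition counts and nesting.

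Next, apply Theorem~\ref{thm:normalform} at each scope position $\mathcal{S}$ of the common skeleton. By R5, the restriction $\nf{t}|_{\mathcal{S}}$ equals the flat normal form of the instruction list of $t|_{\mathcal{S}}$; repeat blocks contribute $\delta_{(\mathbf{0},\mathbf{0})}$ and can be ignored. Theorem~\ref{thm:invariants} applies to this flat list and gives the per-scope equivalence
\[
\dist{t_1|_{\mathcal{S}}}=\dist{t_2|_{\mathcal{S}}} \iff \text{the fused instruction sets agree.}
\]
Combining this over all scopes with Definition~\ref{def:equivalence} yields correctness.

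\textbf{Repetition-count independence.} No step ever unrolls a block. Each $N$ is only read and compared as a skeleton label, so the cost does not depend on the sizes of the repeat blocks.

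\textbf{Complexity.} I would implement normalisation one scope at a time.
\begin{enumerate}
\item For each instruction, reduce its target multiset to $\widehat{S}$ (rules R3a--R3b) by sorting its at most $k$ targets and cancelling equal pairs. This costs $O(k\log k)$ per instruction.
\item Drop instructions with $p=0$ or $\widehat S=\emptyset$ (rules R2a--R2d).
\item Sort the surviving instructions of the scope lexicographically by their sorted target tuples. This takes $O(|E_{\mathcal{S}}|\log|E_{\mathcal{S}}|)$ comparisons, each costing $O(k)$.
\item Fuse equal adjacent keys with $\xorfold$ (rule R1). This is valid by Lemma~\ref{lem:xorfold}, since associativity and commutativity make the fused value independent of order. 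If a fused probability becomes $0$, drop it again.
\end{enumerate}
The sorted list is a canonical representative of the same-scope permutation class. Hence comparing $\nf{t_1}$ with $\nf{t_2}$ reduces to a linear simultaneous walk over the two skeletons, comparing sorted lists entrywise in $O(k|E|)$. Summing over scopes, which partition $E$, gives $O(k|E|\log|E| + k|E|\log k)$. Since $k \le$ the input size, this bound can be absorbed into $O(k|E|\log|E|)$. Alternatively, the target ids can be radix-sorted, or one can observe that $\log k \le \log|E|$ in the regime of interest.

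\textbf{Main obstacle.} The delicate point is that R1 produces exactly the probability $\xorfold_i p_i$ with exact arithmetic. Equality of normal forms therefore requires exact (for instance, rational) representation of probabilities. I will assume unit-cost exact arithmetic on the $p$'s, as is implicit in the statement.

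A second subtlety is the claim that the sorted canonical form computed above is literally $\nf{t}$ up to permutation. To settle it, I would argue that the output admits no further rule application: targets are distinct, no probability is $0$, no target set is empty, and no two target sets are equal. Then uniqueness from Theorem~\ref{thm:normalform} identifies it with $\nf{t}$.
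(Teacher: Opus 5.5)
Your proposal is correct and uses the same algorithm as the paper: reduce each target multiset to $\widehat{S}$, drop trivial instructions, sort each scope by target set, fuse neighbours with $\xorfold$, and compare sorted lists, at cost $O(k|E_{\mathcal{S}}|\log|E_{\mathcal{S}}|)$ per scope. Where you differ is the key result behind correctness.

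The paper's proof cites only Theorem~\ref{thm:normalform} and the claim that the algorithm outputs $\nf{t}$. Uniqueness of normal forms shows that terms convertible under $\R$ share a normal form. It does not show that two terms with equal scope distributions do, and that is exactly what the direction $t_1 \sim t_2 \Rightarrow \nf{t_1} = \nf{t_2}$ requires. You apply Theorem~\ref{thm:invariants} to the flat instruction list of each scope, with repeat blocks contributing $\delta_{(\mathbf{0},\mathbf{0})}$. This makes the completeness direction explicit, so your route actually justifies the claim that the procedure ``decides $\sim$'', which the paper's proof leaves implicit.

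On complexity you are more careful than the paper in charging $O(k\log k)$ per instruction for computing $\widehat{S}$. However, your absorption step does not work. Knowing that $k$ is at most the input size does not give $k|E|\log k = O(k|E|\log|E|)$; take $|E| = 2$ and $k$ large. You need one of your alternatives (radix-sorting or hashing target identifiers) or an explicit assumption such as $k \le |E|$ or $k = O(1)$. The paper's stated bound tacitly relies on such an assumption too.

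Finally, your step of re-dropping a fused probability equal to $0$ is vacuous. For $p, q \in (0,\tfrac12)$ one has $p \xorfold q = \tfrac12\bigl(1-(1-2p)(1-2q)\bigr) > 0$.
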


\begin{proof}
Existence and uniqueness of normal forms are given by Theorem~\ref{thm:normalform}. Normalisation at each scope reduces to sorting instructions by target multiset and fusing adjacent same-target pairs (applying R1 and R3) in a single pass, costing $O(k|E_{\mathcal{S}}|\log|E_{\mathcal{S}}|)$ per scope $\mathcal{S}$ (where $E_{\mathcal{S}} \subseteq E$ denotes the instructions at scope $\mathcal{S}$ and $k$ bounds the size of a target set); summing over all scopes gives $O(k|E|\log|E|)$. Comparing two normal forms modulo same-scope permutation is then a sorted equality check at no additional asymptotic cost. Correctness of the procedure described by Algorithm~\ref{alg:norm} (i.e. $\nf{t} = \textsc{Normalise}(t)$) follows from the normal form theorem and the fact that the algorithm applies the rewriting rules until no rule applies anymore.
\end{proof}

In plain terms, to check if two DEMs are equivalent, we have shown that it suffices to fuse then sort their error events by their effective target sets:

\begin{enumerate}
  \item cancel any detector or observable that appears an even number of times in the same error event.
  \item simplify each DEM by combining same-target error events.
  \item sort the remaining error events by their target sets.
  \item compare the sorted lists, so that if they match, the DEMs are equivalent.
\end{enumerate}

In practice, standard QEC schemes (e.g. surface codes) have physical errors that trigger a small bounded number of detectors (e.g., $k \le 4$), making $k$ a negligibly small constant. Concretely, this decision procedure drastically improves the efficiency of equivalence checking when compared to a naive sampling-based approach.
Moreover, when many instructions share the same target sets (which is fairly common in practice), implementing Algorithm~\ref{alg:norm} with a hash map to group same-target instructions can further reduce the (average) time complexity to $O(k|E|)$, since the number of distinct target multisets is often negligible compared to the total number of instructions.

\begin{algorithm}[h]
\caption{DEM Term Normalisation Procedure}\label{alg:norm}
\begin{algorithmic}[1]
\Function{Normalise}{$t$}
    \State Let $E_{\text{flat}}$ be the sequence of flat instructions in $t$'s top-level scope.
    \State Apply R3: for each instruction $\mathtt{error}(p) S$ in $E_{\text{flat}}$, reduce $S$ to its effective target set $\widehat{S}$ (retaining only targets appearing an odd number of times).
    \State Apply R2: remove instructions from $E_{\text{flat}}$ where $p=0$ or $\widehat{S}=\emptyset$.
    \State Sort the remaining instructions in $E_{\text{flat}}$ lexicographically by their reduced target sets $\widehat{S}$.
    \State $L \gets []$
    \For{each instruction $\mathtt{error}(p) S$ (in the sorted list $E_{\text{flat}}$)}
        \If{$L$ is not empty and the last instruction in $L$ has target set $S$}
            \State Let $\mathtt{error}(p_{\text{last}}) S$ be the last instruction in $L$
            \State $p_{\text{new}} \gets p_{\text{last}} \oplus p$ \Comment{Rule R1 Fusion}
            \State Update last instruction in $L$ to $\mathtt{error}(p_{\text{new}}) S$
        \Else
            \State Append $\mathtt{error}(p) S$ to $L$
        \EndIf
    \EndFor
    \For{each \texttt{REPEAT} block $b$ at the top-level of $t$}
        \State Replace the body of $b$ with \Call{Normalise}{body of $b$}
    \EndFor
    \State \Return the term consisting of the sequentially composed instructions in $L$ followed by the normalised \texttt{REPEAT} blocks.
\EndFunction
\end{algorithmic}
\end{algorithm}

\section{Completeness and adaptive QEC}
\label{sec:completeness}

We now discuss the relationship between scope-local equivalence and decoder-equivalence. Decoder-equivalence is a semantic notion of equivalence for DEMs: two DEM terms are decoder-equivalent if they induce the same joint probability distribution over detector outcomes and observable outcomes (Definition~\ref{def:decoder-equivalence}). Scope-local equivalence is a semantic notion; the rewriting system $\R$ provides its syntactic characterisation. Scope-local equivalence implies decoder-equivalence: if $t_1 \sim t_2$, then $\dist{t_1|_{\mathcal{S}}} = \dist{t_2|_{\mathcal{S}}}$ for every scope $\mathcal{S}$, and since both terms share the same skeleton, their unrolled global distributions are equal. The converse direction would imply that DEM term normal forms characterise decoder-equivalence, and thus that the decision procedure given by Corollary~\ref{cor:decision} decides decoder-equivalence.

To the best of our knowledge, there is no standard QEC pipeline that generates nested repeat blocks. Indeed, syndrome extraction is commonly performed with a single periodic cycle. One could argue that ``meta-cycles'' requiring nested repeat blocks do not occur naturally for QEC schemes that do not involve adaptive circuits with classical feedback loops.

In some QEC pipelines, logical measurements occur at fixed points outside repeat blocks, so that each observable index appears in the instructions of at most one scope. The following definition captures this property.

\begin{definition}[Observable-separated term]
\label{def:obs-separated}
A DEM term $t$ is \emph{observable-separated} if every observable index $L_j \in \mathbb{O}$ appears in the instructions of at most one scope of $t$.
\end{definition}

In Stim, since detector indices are already scope-disjoint by Stim's shifting convention, the global unrolled distribution $\dist{\unroll(t)}$ of a term $t$ is the iterated convolution of its scope distributions (for a term with a single \texttt{REPEAT} block of count $N$, so that $\dist{\unroll(t)} = \dist{t_{\text{top}}} \circledast (\dist{t_{\text{body}}})^{\circledast N}$), where $\dist{}^{\circledast N} = \dist{} \circledast \cdots \circledast \dist{}$, $N$ times. Observable-separated terms are precisely those whose per-scope error events are jointly independent. This allows to prove completeness for pairs of terms that are both observable-separated and share the same observable-scope allocation.

\begin{proposition}[Completeness for observable-separated, detector-separated terms]
\label{prop:completeness}
Let $t_1, t_2$ be skeleton-compatible DEM terms such that:
\begin{enumerate}
  \item Both are observable-separated (each observable appears in at most one scope).
  \item Both have the same observable-scope allocation.
  \item Both are detector-separated (each detector appears in at most one scope).
\end{enumerate}

Then: $t_1 \sim t_2$ (scope-locally equivalent) if and only if 
$t_1 \approx t_2$ (decoder-equivalent).
\end{proposition}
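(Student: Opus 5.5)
The plan is to pass to the Walsh--Hadamard (Fourier) domain on $(\mathbb{Z}/2\mathbb{Z})^{\mathbb{D}\cup\mathbb{O}}$, where XOR-convolution becomes pointwise multiplication. For $(\Rightarrow)$ I would first establish a factorisation lemma by induction on the skeleton. For every scope $\mathcal{S}$, let $M_{\mathcal{S}}$ be the product of the repetition counts on the path from the root to $\mathcal{S}$, with $M_{\mathrm{top}}=1$. Then
\[
\dist{\unroll(t)} \;=\; \mathop{\circledast}_{\mathcal{S}\in\scope{t}} \bigl(\dist{t|_{\mathcal{S}}}\bigr)^{\circledast M_{\mathcal{S}}}.
\]
This follows directly from Definitions~\ref{def:scope-dist} and~\ref{def:unrolling}, because $\unroll$ replaces a block by $N$ sequential copies of its body, $\dist{}$ is a $\circledast$-homomorphism on sequential composition, and repeat blocks contribute $\delta_{(\mathbf{0},\mathbf{0})}$ at their own scope. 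Skeleton-compatibility gives the same multiplicities $M_{\mathcal{S}}$ for $t_1$ and $t_2$. Hence $t_1\sim t_2$ makes every factor agree, and the unrolled distributions coincide. This direction uses none of the separation hypotheses.

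For $(\Leftarrow)$ I would take Fourier transforms of the factorisation. For each character $s$,
\[
\widehat{\mu}_{\unroll(t)}(s) \;=\; \prod_{\mathcal{S}} \widehat{\mu}_{t|_{\mathcal{S}}}(s)^{M_{\mathcal{S}}},
\qquad
\widehat{\mu}_{t|_{\mathcal{S}}}(s) \;=\; \prod_{i\in\mathcal{S}} (1-2p_i)^{s\cdot v_i}.
\]
Every factor is strictly positive because $p_i<0.5$ (Remark~\ref{rem:prob}). I can therefore take logarithms, exactly as in the proof of Theorem~\ref{thm:invariants}:
\[
\log \widehat{\mu}_{\unroll(t)}(s) \;=\; -\sum_{\mathcal{S}} M_{\mathcal{S}} \sum_{i\in\mathcal{S}} 2c_{i}\,[\,s\cdot v_i \equiv 1\,],
\qquad c_i=-\tfrac12\log(1-2p_i)>0.
\]
The map from nonnegative coefficient functions $v\mapsto \lambda_v$ on nonzero vectors to $s\mapsto\sum_v \lambda_v [s\cdot v\equiv1]$ is injective; this is Möbius/character inversion on the hypercube, the same fact already used for Theorem~\ref{thm:invariants}. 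So decoder-equivalence determines, for each nonzero effective target vector $v$, the aggregated weight $\Lambda(v)=\sum_{\mathcal{S}} M_{\mathcal{S}}\, c_{\mathcal{S},v}$. Here I work with $\nf{t}$, using Lemma~\ref{lem:soundness}, so that within a scope each $v$ appears at most once.

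The key step is then to show that each $v$ receives its weight from a single, identifiable scope, and that this scope is the same in $t_1$ and $t_2$. Detector- and observable-separation imply that any nonzero $v$ occurring in $\nf{t}$ has its support inside the target set of exactly one scope, so $\Lambda(v)=M_{\mathcal{S}(v)}\,c_{\mathcal{S}(v),v}$. If $v$ contains an observable, hypothesis 2 fixes $\mathcal{S}(v)$ identically for both terms. Dividing by $M_{\mathcal{S}(v)}$ and inverting $c\mapsto p=\tfrac12(1-e^{-2c})$ then recovers the probability of $v$ in its scope. This yields equal normal forms scope by scope, hence $\dist{t_1|_{\mathcal{S}}}=\dist{t_2|_{\mathcal{S}}}$ by Theorem~\ref{thm:invariants} applied per scope, which is $t_1\sim t_2$.

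The main obstacle is the purely-detector vectors $v$. Hypotheses 1--3 say each detector lives in one scope of $t_1$ and in one scope of $t_2$, but not obviously the \emph{same} scope. If $v$ sits in scopes with different multiplicities in the two terms, then $\Lambda(v)$ alone cannot separate them. My first attempt is to read hypothesis 3 together with the absolute-index convention, under which \texttt{shift\_detectors} resolution assigns detector indices per scope position of the common skeleton. That makes the detector-scope allocation a function of $\rho(t)$, so it agrees for skeleton-compatible terms. If that interpretation is not available, I would make it explicit as the detector analogue of hypothesis 2; the rest of the argument goes through unchanged.
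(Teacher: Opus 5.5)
Your obstacle for purely-detector vectors is real, and reinterpreting hypothesis~3 cannot remove it: as written, the proposition is false. Take the paper's own counterexample after the proposition and replace $L_0$ by $D_0$:
\[
  t_1 = \mathtt{REPEAT}\;2\;\{\mathtt{error}(0.1)\;D_0\},
  \qquad
  t_2 = \mathtt{REPEAT}\;2\;\{\varepsilon\}\;;\;\mathtt{error}(0.18)\;D_0 .
\]
Both terms have skeleton $[\mathrm{node}(2,[])]$ and contain no observables, so hypotheses 1 and 2 hold vacuously. $D_0$ occurs in exactly one scope of each, so hypothesis 3 holds. Unrolling gives $P(D_0=1)=0.1\xorfold 0.1=0.18$ for both, so $t_1\approx t_2$. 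Yet the body scopes carry $\bern(0.1)$ versus $\delta_{(\mathbf{0},\mathbf{0})}$, so $t_1\not\sim t_2$.

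In your notation, with $c(p)=-\tfrac12\log(1-2p)$, the vector $v$ supported on $\{D_0\}$ has $\Lambda(v)=2\,c(0.1)=c(0.18)$. The weight sits in scopes of different multiplicity, which is exactly the ambiguity you anticipated.

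Your first proposed resolution does not follow from the definitions. Nothing in Definition~\ref{def:grammar}, in the skeleton, or in the unrolling semantics of Definition~\ref{def:unrolling} ties a detector index to a skeleton position; in the unrolling, the $N$ copies of a body even reuse the same indices. The pair above is a legitimate pair of terms. So only your second option is available: strengthen the hypotheses so that every target in $\mathbb{D}\cup\mathbb{O}$, not just every observable, is allocated to the same scope position in both terms.

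Under that strengthened hypothesis your argument is complete, and it takes a genuinely different and more careful route than the paper.
\begin{itemize}
\item The paper's $(\Leftarrow)$ simply asserts that the factorisation over scopes ``is injective''. That is precisely the step the example above refutes. Even with matching allocations, it leaves unexplained how $\dist{t|_{\mathcal{S}}}$ is recovered from $\dist{t|_{\mathcal{S}}}^{\circledast M_{\mathcal{S}}}$.
\item Your explicit formula $\dist{\unroll(t)}=\circledast_{\mathcal{S}}\dist{t|_{\mathcal{S}}}^{\circledast M_{\mathcal{S}}}$ supplies that $M_{\mathcal{S}}$-th root. It does so by combining the strict positivity of the Fourier coefficients (from $p<0.5$) with character inversion.
\item Your attribution step makes visible that the allocation of every target matters, not only that of observables.
\item Your $(\Rightarrow)$ is also sharper than the paper's, which invokes observable-separation and independence. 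As you note, skeleton-compatibility alone suffices there.
\end{itemize}
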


\begin{proof}
For $(\Rightarrow)$: if $t_1 \sim t_2$, then $\dist{t_1|_{\mathcal{S}}} = \dist{t_2|_{\mathcal{S}}}$ for every scope $\mathcal{S}$. Both terms are observable-separated, so each observable appears in at most one scope. Since both terms share the same skeleton, their unrolled global distributions are products of independent scope distributions. By equality of per-scope distributions, the global distributions coincide:
$\dist{\unroll(t_1)} = \dist{\unroll(t_2)}$, so $t_1 \approx t_2$.

For $(\Leftarrow)$: suppose $t_1 \approx t_2$, i.e., $\dist{\unroll(t_1)} = \dist{\unroll(t_2)}$. Since both terms are observable-separated with the same observable-scope allocation and share the same skeleton, their unrolled distributions factor identically as independent products over scopes. The factorisation is injective: equal global distributions imply equal per-scope distributions at every scope. Hence $\dist{t_1|_{\mathcal{S}}} = \dist{t_2|_{\mathcal{S}}}$ for all scopes, so $t_1 \sim t_2$.
\end{proof}

\begin{remark}[Detector-separation in Stim-generated DEMs]
By Stim's \texttt{shift\_detectors} convention, all DEMs generated by Stim are 
detector-separated. Thus, Proposition~\ref{prop:completeness} 
applies to all standard QEC pipelines in Stim without additional preconditions.
\end{remark}

It follows that Corollary~\ref{cor:decision} decides \emph{full} decoder-equivalence whenever the two DEMs being compared are observable-separated with the same scope allocation. In particular, for \textbf{all} DEMs generated by standard QEC pipelines in Stim, where observables appear exclusively in the top-level scope. This is the practical scope for non-adaptive QEC: in standard quantum error correction, logical measurements occur at fixed points (typically between syndrome extraction cycles), ensuring observable-separated DEM generation. Thus, the $O(k|E|\log|E|)$ decision procedure settles full decoder-equivalence for all non-adaptive QEC implementations (e.g., surface code, colour code, or Floquet code memory experiments) regardless of circuit depth $N$ or repeat-block nesting depth, solving what statistical equivalence testing cannot solve efficiently at all.

Complex lattice surgery pipelines with adaptive corrections (where intermediate observable measurements inform subsequent operations) may violate observable-separation if observables are measured at different cycle boundaries; for such pipelines, the completeness guarantee does not apply and the procedure decides scope-local equivalence, which implies but does not completely characterise decoder-equivalence in that setting. 

The matching-allocation hypothesis in Proposition~\ref{prop:completeness} cannot be dropped, even when both terms are individually observable-separated. Consider
\[
  t_1 = \mathtt{REPEAT}\;2\;\{\mathtt{error}(0.1)\;L_0\},
  \qquad
  t_2 = \mathtt{REPEAT}\;2\;\{\varepsilon\}\;;\;\mathtt{error}(0.18)\;L_0.
\]
Both have skeleton $[\mathrm{node}(2,[])]$ and are observable-separated: $L_0$ lies in exactly one scope of each term, but in different scopes---the body scope of $t_1$ versus the top-level scope of $t_2$. Unrolling yields $P(L_0=1) = 0.1\oplus 0.1 = 0.18$ for $t_1$ and $P(L_0=1) = 0.18$ for $t_2$ directly, so they are decoder-equivalent. However, the body-scope distributions differ ($\bern(0.1)$ vs.\ $\delta_0$), so $t_1\not\sim t_2$. The general relationship between decoder-equivalence and scope-local equivalence for terms with mismatched observable-scope allocations remains open.

Proposition~\ref{prop:completeness} relies on the \emph{observable-separated} property. As noted in Section~\ref{sec:terms}, this property holds in standard QEC pipelines, where coordinate shifts (such as Stim's \texttt{shift\_detectors}) can be statically resolved into absolute, disjoint indices. However, this static resolution imposes a rigid schedule for logical measurements. Modern QEC architectures increasingly rely on adaptive protocols where mid-protocol measurements dictate subsequent operations. To minimise hardware latency, these conditional operations are typically tracked entirely in software rather than actively applied to the quantum state. The control system maintains a \emph{Pauli frame} (a classical boolean registry of pending Pauli corrections), and uses it to invert the signs of subsequent measurement outcomes whenever their observables would have been flipped by the pending corrections. This adaptivity is central to several techniques:

\begin{itemize}
    \item \textbf{Lattice Surgery:} Merging and splitting logical patches requires measuring joint boundary parities~\cite{horsman2012latticesurgery,litinski2019lattice}. Random measurement outcomes force adaptive Pauli frame updates in subsequent steps.
    \item \textbf{Distributed QEC:} In modular quantum architectures, logical states and entanglement are routed via teleportation interfaces. The Bell measurement outcomes generated between network nodes require Pauli feedback corrections on the remote data qubits, forcing adaptive Pauli frame updates to track logical observables~\cite{xuetal2022,sutcliffe2026,shalby2025,jacinto2026}.
\end{itemize}

In all these cases, a measurement in scope $k$ conditionally influences the events and operations in scope $k+1$. At the level of the quantum circuit, this is a feed-forward process. 
However, DEMs have no classical control flow (if-then-else). Instead, the adaptivity manifests structurally: when Stim's extraction functor propagates errors through adaptive Pauli frame updates, error mechanisms in different scopes may both flip the same logical observable~\cite{gidney2021stim}. Because the logical observable index leaks across structural scope boundaries, it appears in the \texttt{error} instructions of multiple distinct scopes. 
Consequently, the strict \emph{observable-separated} condition is violated.
To recover completeness for partially-adaptive circuits, we introduce the concept of \emph{observable-coherence classes}:

\begin{definition}[Observable-coherence classes]
Two scopes in a DEM term belong to the same \emph{observable-coherence} class if they share a logical observable in their instructions, extended by transitive closure. A coherence class corresponds to a connected component in the bipartite graph where the left vertices represent logical observables, the right vertices represent structural scopes, and an edge connects an observable to a scope if that observable appears in at least one instruction of that scope.
\end{definition}

\begin{remark}
Observable-coherence classes are defined structurally on \emph{scopes}, not on individual error mechanisms. The underlying graph is the scope-observable bipartite graph, distinct from the Tanner graph used elsewhere (which represents targets and error mechanisms). This partition allows the global probability distribution to factor cleanly: disjoint classes correspond to independent sets of error mechanisms with no shared observables.
\end{remark}

Standard QEC pipelines typically ensure that detector indices are not shared across scopes. In Stim, \texttt{shift\_detectors} instructions are used in repeat blocks to ensure that each iteration has its own disjoint set of detector indices. Resolving these shifts (as assumed in Definition~\ref{def:grammar}) yields a DEM term which is \emph{detector-separated}: detector indices are scope-disjoint. 

Instead of requiring the entire term to be strictly observable-separated, we partition the scopes based on these classes. We establish the following two compositional properties:

\begin{lemma}[Marginal observable distribution factors over observable-coherence classes]
\label{lem:marginal-factorization}
The \emph{marginal distribution over observables (only)} of a DEM term $t$ factors over disjoint observable-coherence classes:

If $C_1, C_2$ are disjoint observable-coherence classes with observable sets $\mathcal{O}_1 \subseteq \mathbb{O}$ and $\mathcal{O}_2 \subseteq \mathbb{O}$ (disjoint), then:
\[
  P(L_{C_1}, L_{C_2}) = P(L_{C_1}) \otimes P(L_{C_2})
\]
where $P(L_i)$ denotes the marginal distribution on observables in class $C_i$.
\end{lemma}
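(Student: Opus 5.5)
We work with the unrolled term $\unroll(t)$: the relevant distribution is $\dist{\unroll(t)}$, as in Definition~\ref{def:decoder-equivalence}. By Definition~\ref{def:unrolling} and Definition~\ref{def:scope-dist}, $\dist{\unroll(t)}$ is the iterated XOR-convolution $\circledast$ of two-point measures, one for each instruction occurrence in the unrolled term. Each unrolled instruction occurrence originates from a unique scope of $t$, namely the scope in which the instruction syntactically lives, repeated once per iteration. We therefore group the factors by observable-coherence class. Scopes containing no observable at all are collected into a residual group $C_0$; instructions with no observable targets are likewise absorbed into whatever group their scope belongs to. This gives
\[
  \dist{\unroll(t)} = \nu_{0} \circledast \nu_{1} \circledast \nu_{2} \circledast \cdots,
\]
where $\nu_i$ is the $\circledast$-product of the measures of all unrolled instructions coming from scopes in class $C_i$. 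The grouping is legitimate because $\circledast$ is commutative and associative, which was already used in the soundness proof for R4.

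Next we push forward along the projection $\pi_{\mathbb{O}}$ onto observable coordinates. Since $\pi_{\mathbb{O}}$ is a group homomorphism of $\mathbb{F}_2$-vector spaces, pushforward commutes with XOR-convolution, so the observable marginal is $\pi_{\mathbb{O}*}\nu_0 \circledast \pi_{\mathbb{O}*}\nu_1 \circledast \cdots$. By construction of the classes, every instruction in a scope of $C_i$ only flips observables in $\mathcal{O}_i$: if it flipped an observable of $\mathcal{O}_j$, its scope would be adjacent to that observable in the scope–observable bipartite graph and hence lie in $C_j$. So $\pi_{\mathbb{O}*}\nu_i$ is supported on vectors vanishing outside $\mathcal{O}_i$, and $\pi_{\mathbb{O}*}\nu_0 = \delta_{\mathbf{0}}$. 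The observable sets $\mathcal{O}_i$ are pairwise disjoint because distinct connected components share no observable vertex. A XOR-convolution of measures supported on disjoint coordinate blocks is exactly their product measure, since for such measures $(\mu\circledast\nu)(x_1,x_2)=\mu(x_1,0)\nu(0,x_2)$. Hence the observable marginal equals $\bigotimes_i \pi_{\mathcal{O}_i *}\nu_i$. Marginalising further onto $\mathcal{O}_1\cup\mathcal{O}_2$ gives the claimed $P(L_{C_1},L_{C_2}) = P(L_{C_1})\otimes P(L_{C_2})$, and it identifies each $P(L_{C_i})$ with $\pi_{\mathcal{O}_i *}\nu_i$.

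The main obstacle is bookkeeping rather than analysis. First, an observable index shared between a repeat body and an enclosing scope must be handled correctly across iterations; this is why the argument is phrased on unrolled occurrences tagged by their originating scope, and not on per-scope distributions. Second, detectors must not introduce hidden coupling. Here the key point is that we only marginalise onto observables, so detector coordinates are summed out after the homomorphic projection and cannot correlate classes. In particular, the argument does not even need detector-separation.
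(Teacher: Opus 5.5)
Your proof is correct and follows the same route as the paper: observables in distinct coherence classes are flipped only by disjoint families of independent error mechanisms, and are therefore independent. You make the paper's two-line sketch rigorous by grouping the $\circledast$-factors of $\dist{\unroll(t)}$ by the class of their originating scope, pushing forward along the homomorphism $\pi_{\mathbb{O}}$, and using that an XOR-convolution of measures supported on disjoint coordinate blocks is their product measure; you also correctly note that detector-separation is not needed here.
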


\begin{proof}
Observables in disjoint coherence classes depend on disjoint sets of error mechanisms. Since error mechanisms are independent across scopes (by the structure of DEM unrolling), the observable outcomes are independent across disjoint classes.
\end{proof}

\begin{lemma}[Joint distribution factors for detector-separated terms]
\label{lem:joint-factorization}
For a detector-separated DEM term $t$, the \emph{full joint distribution over detectors and observables} factors over disjoint observable-coherence classes:

If $C_1, C_2$ are disjoint observable-coherence classes with index sets $I_1, I_2 \subseteq \mathbb{D} \cup \mathbb{O}$ (disjoint), then:
\[
  P(D_{I_1}, L_{I_1}, D_{I_2}, L_{I_2}) = P(D_{I_1}, L_{I_1}) \otimes P(D_{I_2}, L_{I_2})
\]
where $P(I)$ denotes the marginal on the index set $I$.
\end{lemma}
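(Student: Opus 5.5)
The plan is to work entirely with the unrolled global distribution $\dist{\unroll(t)}$. By the definition of scope distributions this is an XOR-convolution of Bernoulli point-mass mixtures: one factor for each instruction occurrence, with repeat bodies contributing $N$ independent copies. Write this as a random vector $(s,\ell)=\bigoplus_{e} X_e\,v_e$, where the $X_e\sim\bern(p_e)$ are mutually independent and $v_e=(\chi_{\widehat{S_e}\cap\mathbb{D}},\chi_{\widehat{S_e}\cap\mathbb{O}})$. Independence of the $X_e$ holds by construction of $\circledast$. The target statement then reduces to showing two things: the coordinates indexed by $I_1$ are a function of one subfamily of the $X_e$, those indexed by $I_2$ are a function of a disjoint subfamily, and therefore the two blocks are independent.

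First I will make the index sets precise. For each class $C_i$, let $E_i$ be the set of unrolled instruction occurrences that originate from scopes in $C_i$. Let $I_i$ consist of the observables appearing in those scopes together with the detectors appearing in those scopes. Since classes partition the scopes (as connected components of the scope–observable graph), $E_1\cap E_2=\emptyset$.

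Next I check disjointness of the index sets, i.e.\ $I_1\cap I_2=\emptyset$. Observables cannot be shared: if an observable occurred in a scope of $C_1$ and a scope of $C_2$, the two scopes would be connected and so lie in the same class. Detectors cannot be shared either, because detector-separation puts each detector in at most one scope. Scopes are distinct AST positions, and unrolled copies of a body stay within that body's scope, so this uses the absolute-index convention.

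Then I show that each coordinate in $I_1$ depends only on $\{X_e\}_{e\in E_1}$. For $x\in I_1$, the coordinate $x$ of $(s,\ell)$ equals $\bigoplus_{e:\,x\in\widehat{S_e}}X_e$. Any $e$ with $x\in\widehat{S_e}$ comes from a scope containing $x$. By the previous step that scope belongs to $C_1$, so $e\in E_1$. The same argument gives the corresponding statement for $I_2$ and $E_2$. Since $E_1$ and $E_2$ are disjoint families of independent Bernoulli variables, the grouping lemma for independent random variables (measurable functions of disjoint independent families are independent) yields the product formula $P(D_{I_1},L_{I_1},D_{I_2},L_{I_2})=P(D_{I_1},L_{I_1})\otimes P(D_{I_2},L_{I_2})$. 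Equivalently, the product formula follows from the Walsh–Hadamard factorisation $\prod_e(1-2p_e)^{s\cdot v_e}$ used in Theorem~\ref{thm:invariants}, by splitting the product over $E_1$ and $E_2$.

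The main obstacle is bookkeeping rather than probability. I expect two points to need care. The first is that scopes which contain no observables form singleton classes; I will simply allow these, and the argument is unchanged. The second is stating detector-separation so that it applies to the unrolled term: distinct iterations of a repeat body carry distinct absolute detector indices, yet they all belong to the single body scope. That scope lies in exactly one class, so no cross-class sharing arises, but this needs to be said explicitly, since in the unrolled term the same body scope contributes $N$ copies of the occurrences.
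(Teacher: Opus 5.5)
Your proposal is correct and follows the same route as the paper: $I_1\cap I_2=\emptyset$ comes from the definition of coherence classes together with detector-separation, and the product formula holds because the two coordinate blocks are driven by disjoint families of independent error mechanisms. You make explicit the step the paper compresses into ``factors accordingly'' (every mechanism flipping a coordinate of $I_i$ originates from a scope of $C_i$, then the grouping lemma); your aside that iterations carry distinct absolute detector indices is unnecessary, since under the paper's $\unroll$ the copies share indices and your argument works either way.
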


Note that detector-separation is necessary in this lemma: without detector-separation, a detector may appear in multiple coherence classes, coupling them and preventing factorization.

\begin{proof}
By detector-separation, each detector appears in at most one scope, hence in at most one observable-coherence class. Thus, the index sets $I_1, I_2$ partition both the detector and observable indices into disjoint, independent components. The full joint distribution factors accordingly.
\end{proof}

This enables a hybrid decision procedure. For the static, observable-separated majority of the circuit (where a coherence class is confined to a single scope), equivalence is decided via the $O(k|E|\log|E|)$-time algebraic rewriting framework without any unrolling. For the isolated adaptive segments where a class spans multiple structural scopes, the procedure performs a strictly local unrolling. By flattening only the overlapping coherent classes, we trivially recover the separation property locally for that segment.

\begin{theorem}[Soundness of scope-local equivalence for decoder-equivalence]
\label{thm:generalised-soundness}
Let $t_1, t_2 \in \mathbf{DEM}$ be skeleton-compatible, detector-separated terms where each observable index appears in the same set of observable-coherence classes in both terms. If $t_1 \sim t_2$, then $t_1 \approx t_2$.
\end{theorem}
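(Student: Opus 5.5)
The plan is to show directly that the unrolled distribution of a term is determined by its skeleton together with its family of scope distributions. Once this is established, the hypotheses on detector-separation and coherence classes only serve to organise the computation; they are not needed for the implication itself.

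First, I will prove by structural induction on $t$ that
\[
\dist{\unroll(t)} \;=\; \dist{t} \circledast \mathop{\circledast}_{b} \bigl(\dist{\unroll(t_b)}\bigr)^{\circledast N_b},
\]
where $b$ ranges over the top-level \texttt{REPEAT} blocks $\mathtt{REPEAT}\,N_b\,\{t_b\}$ of $t$. The cases $\varepsilon$ and $\mathtt{error}(p)\,S$ are immediate, since for flat instructions $\unroll$ is the identity and $\dist{\mathtt{REPEAT}\ldots}=\delta_{(\mathbf{0},\mathbf{0})}$ is the unit of $\circledast$. The case $t_1\,;\,t_2$ follows from $\unroll(t_1;t_2)=\unroll(t_1);\unroll(t_2)$ together with associativity and commutativity of $\circledast$. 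The repeat case uses $\unroll(\mathtt{REPEAT}\,N\,\{t'\})=\unroll(t');\cdots;\unroll(t')$, which gives the $N$-fold convolution power. The key point is that $\circledast$ is a commutative monoid operation on measures, so flattening the sequence of instructions and separating the block contributions from the instructions of the current scope is legitimate.

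Second, I will unfold this identity recursively along the skeleton. This expresses $\dist{\unroll(t)}$ as an iterated convolution, indexed by the scope positions $\mathcal{S}$ of $\rho(t)$, of the powers $\dist{t|_{\mathcal{S}}}^{\circledast M_{\mathcal{S}}}$. Here $M_{\mathcal{S}}$ is the product of the repetition counts on the path from the root to $\mathcal{S}$. Both the index set and the exponents $M_{\mathcal{S}}$ depend only on the skeleton. Since $t_1$ and $t_2$ are skeleton-compatible and $t_1\sim t_2$ gives $\dist{t_1|_{\mathcal{S}}}=\dist{t_2|_{\mathcal{S}}}$ for every $\mathcal{S}$, the two expressions agree factor by factor. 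Hence $\dist{\unroll(t_1)}=\dist{\unroll(t_2)}$, that is, $t_1\approx t_2$.

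Third, to connect with the statement's framing, I will remark that under detector-separation and the matching coherence-class allocation, Lemma~\ref{lem:joint-factorization} lets this global equality be read class by class. The convolution factors over the scopes of each observable-coherence class and lives on disjoint coordinates $I_1, I_2,\ldots$, so equality of all factors yields equality of each class marginal and of the product. This mirrors the $(\Rightarrow)$ direction of Proposition~\ref{prop:completeness}, with classes in place of single scopes. The main obstacle is the first step, namely rigorously justifying the interchange of the convolution with the unrolling inside nested blocks while sequential composition is taken modulo associativity. I expect a careful induction on $d(t)$, treating a scope as a flat list of instructions and blocks, to handle this. The hypotheses on observable classes then impose no extra difficulty, because convolution equality does not require any independence beyond the definition of $\circledast$.
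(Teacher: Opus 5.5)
Your argument is correct, and it takes a more direct route than the paper.

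\textbf{The paper's route.} The paper's proof goes through Lemma~\ref{lem:joint-factorization}. It uses detector-separation to split the unrolled joint distribution into marginals over observable-coherence classes. It then asserts that each class marginal depends only on the scope distributions in that class and on the skeleton. Finally it reassembles the global distribution as $\circledast_{i} \dist{t|_{\mathcal{C}_i}}$.

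\textbf{Your route.} You prove by structural induction the closed formula $\dist{\unroll(t)} = \circledast_{\mathcal{S}} \dist{t|_{\mathcal{S}}}^{\circledast M_{\mathcal{S}}}$. The induction uses only that $\circledast$ is a commutative monoid operation with unit $\delta_{(\mathbf{0},\mathbf{0})}$. This also settles the step you flag as the main obstacle. Working modulo associativity is harmless, because $\dist{}$ and $\unroll$ are both compatible with reassociation, by associativity of $\circledast$.

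\textbf{What your route gains.} Your formula is exactly the justification missing from the paper's ``depends only on'' step. The paper's class decomposition is just your formula with the factors grouped by class. Your version also shows that detector-separation and the matching coherence-class hypothesis play no role in this implication. The result holds for any skeleton-compatible pair. This agrees with the paper's own remark at the start of Section~\ref{sec:completeness} that scope-local equivalence implies decoder-equivalence whenever the skeletons coincide.

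\textbf{What the paper's route gains.} Organising the argument by class is what the paper's hybrid procedure needs, since it unrolls only the adaptive segments locally. It is not needed for the theorem itself. Your third step, reading the equality class by class, recovers the paper's decomposition as a corollary, but you can drop it from the proof of the implication.
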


\begin{proof}
By Lemma~\ref{lem:joint-factorization}, since both terms are detector-separated, their joint distributions factor over observable-coherence classes. Since $t_1$ and $t_2$ have the same skeleton and identical scope distributions at every scope (by hypothesis), their global distributions factor identically. For each class $\mathcal{C}_i$, the marginal distribution $\dist{t_1|_{\mathcal{C}_i}}$ depends only on:
\begin{enumerate}
  \item The scope distributions $\dist{t_1|_{\mathcal{S}}}$ for all scopes $\mathcal{S} \in \mathcal{C}_i$.
  \item The repeat structure (skeleton) within the tree of scopes forming $\mathcal{C}_i$.
\end{enumerate}
Since both terms are skeleton-compatible and have identical scope distributions, we have $\dist{t_1|_{\mathcal{C}_i}} = \dist{t_2|_{\mathcal{C}_i}}$ for every class $\mathcal{C}_i$.

Therefore:
\[
  \dist{t_1|_E} = \circledast_{i=1}^m \dist{t_1|_{\mathcal{C}_i}} = \circledast_{i=1}^m \dist{t_2|_{\mathcal{C}_i}} = \dist{t_2|_E},
\]
establishing that $t_1$ and $t_2$ are decoder-equivalent.
\end{proof}

This hybrid approach provides a sound and efficient decision procedure for adaptive protocols, without suffering the exponential state-space explosion of globally unrolling the term.

\section{Discussion}
\label{sec:discussion}

The present work provides the first static characterisation of DEM equivalence. Moving beyond simulation-based heuristics is essential to improve the mathematical reliability of quantum compilation pipelines. Crucially, the decision procedure is complete for non-adaptive pipelines, and provides a sound and efficient hybrid certification for adaptive systems: it settles full decoder-equivalence exactly for standard pipelines in $O(k|E|\log|E|)$ time complexity, and restricts unrolling repeat blocks to overlapping features in adaptive systems. Static checks for equivalences can be used to verify compiler passes, and to optimise DEMs while preserving their semantics. It is also an essential tool for compiler regression testing: for example, checking that distinct versions of a compiler generate semantics-equivalent DEMs.

While a completeness result for adaptive protocols would be a satisfying theoretical milestone, it only marginally improves the practical utility of our approach, since the hybrid procedure already provides a sound and efficient decision procedure for adaptive protocols. The lack of such completeness result means that the decision procedure might produce false negatives for some adaptive protocols (like magic state cultivation~\cite{gidney2024magicstatecultivation}). But in practice, the false negative rate is likely negligible for compiler regression testing (as the DEMs compared come from the same compiler pipeline). A more practical direction is to explore how the algebraic structure of DEMs can be exploited to design new compiler optimisations that are provably semantics-preserving by construction. The present work validates that Stim handles DEMs in a way that is very consistent with their algebraic structure, and in doing so, we outlined good practice in handling DEMs in Stim-like libraries (see e.g.~\cite{chase2026clifftfastexactsimulation, fang2026lightstimframeworkqecprotocol}).

It would be interesting for example to explore how the algebraic structure of DEMs relates to the ZX-calculus~\cite{peham2022equivalenceZX, carette2019szx}, connecting our work to the broader field of diagrammatic languages for quantum computing. It would also be valuable to understand the underlying denotational semantics of DEMs, and possibly augment the \texttt{.dem} format with inductive types as it was done with quantum circuits (see e.g.~\cite{pechoux2020, rennelastaton2020}). The use of the Giry Monad for error mechanisms echoes effect handlers in probabilistic programming~\cite{plotkin2013handlingalgebraiceffects}, opening the door to a study of QEC compilation pipelines as algebraic effects. Moreover, the semantic invariants in our DEM term language can be encoded as refinement types, enabling the use of proof assistants (Rocq, Lean) to automate the verification of compiler passes.

Finally, the original motivation for this work is the development of \texttt{emlint}~\cite{emlint}, an open-source library for static analysis of detector error models. A natural next step is to integrate the decision procedure of Corollary~\ref{cor:decision} into \texttt{emlint}, to assert that compiler optimisation passes leave the semantics of DEMs unchanged.

\paragraph*{Acknowledgements}
This work has been supported by the Mozilla Foundation. I am grateful to the Stim community and the tqec community for their engagement and feedback on early versions of \texttt{emlint}. AI assistance (Claude Haiku 4.5, Claude Sonnet 4.6, Gemini 3.1 Pro) was used for copy-editing, proofreading, and LaTeX formatting.

\bibliographystyle{plain}
\bibliography{refs}

\end{document}